 \DeclareMathOperator{\rank}{rank}
\begin{document}
\title{On the Index Coding Problem and its Relation to Network Coding and Matroid Theory }

% TODO - change title?

\author{\authorblockN{Salim Y. El Rouayheb, Alex Sprintson, and Costas N. Georghiades}\\
\authorblockA{
Department of Electrical and Computer Engineering\\
Texas A\&M University\\
College Station, TX, USA\\
Email: \{salim, spalex, c-georghiades\}@ece.tamu.edu}
}
%%
%%\date{Draft of \today}
\maketitle

\newtheorem{theorem}{Theorem}%[section]
\newtheorem{lemma}[theorem]{Lemma}
\newtheorem{claim}[theorem]{Claim}
\newtheorem{proposition}[theorem]{Proposition}%[section]
\newtheorem{condition}[theorem]{Condition}%[section]
\newtheorem{observation}[theorem]{Observation}%[section]
\newtheorem{conjecture}[theorem]{Conjecture}%[section]
\newtheorem{property}[theorem]{Property}%[section]4
\newtheorem{assertion}[theorem]{Assertion}%[section]
\newtheorem{example}[theorem]{Example}%[section]
\newtheorem{definition}[theorem]{Definition}%[section]
\newtheorem{corollary}[theorem]{Corollary}%[section]
\newtheorem{remark}[theorem]{Remark}%[section]
\newtheorem{note}[theorem]{Note}
\newtheorem{problem}[theorem]{Problem}

\begin{abstract}

The \emph{index coding} problem has recently attracted a significant attention from the research community due to its theoretical significance and applications in wireless ad-hoc networks. An instance of the index coding problem includes a sender that holds a set of information messages $X=\{x_1,\dots,x_k\}$ and a set of receivers $R$. Each receiver $\rho=(x,H)\in R$ needs to obtain a message $x\in X$ and has prior \emph{side information} comprising a subset $H$ of  $X$. The sender uses a noiseless communication channel to broadcast encoding of messages in $X$ to all clients. The objective is to find an encoding scheme that minimizes the number of transmissions required to satisfy the receivers' demands with \emph{zero error}.

In this paper, we analyze the relation between the index coding problem, the more general network coding problem and the problem of finding a linear representation of a matroid. In particular, we show that any instance of the network coding and matroid representation problems can be efficiently reduced to an instance of the index coding problem. Our reduction implies that many important properties of the network coding and matroid representation problems carry over to the index coding problem. Specifically, we show that \emph{vector linear codes} outperform scalar linear codes and that vector linear codes are insufficient for achieving the optimum number of transmissions.

\end{abstract}

\section{Introduction}

In recent years there has been a significant interest in utilizing the broadcast nature of wireless signals to improve the throughput and reliability of ad-hoc wireless networks. The wireless medium allows the sender node to deliver data to several neighboring nodes with a single transmission. Moreover, a wireless node can opportunistically listen to the wireless channel and store all the obtained packets, including those designated for different nodes. As a result, the wireless nodes can obtain side information which, in combination with  proper encoding techniques, can lead to a substantial improvement in the performance of the wireless network.

Several recent studies focused on wireless architectures that utilize the broadcast properties of the wireless channel by using  coding techniques. In particular, \cite{KRHKMC06,Katti_SymbolLevelNetworkCodingForWirelessMeshNetworks_2008} proposed new architectures, referred to as COPE and MIXIT, in which  routers mix packets from different information sources to increase the overall network throughput. Birk and Kol  \cite{BK98,BK06} discussed applications of coding techniques in satellite networks with caching clients with a low-capacity reverse channel \cite{BK98,BK06}.

The major challenge in the design of opportunistic wireless networks is to identify an optimal encoding scheme that minimizes the number of transmissions necessary to satisfy all client nodes. This can be formulated as  the  \emph{Index Coding} problem that includes a single sender node $s$ and a set of receiver nodes $R$. The sender has a set of information messages $X=\{x_1,\dots,x_k\}$ that need to be delivered to the receiver nodes. Each receiver $\rho=(x,H)\in R$ needs to obtain a single message  $x$ in $X$ and has prior \emph{side information} comprising a subset $H\subseteq X$. The sender can broadcast the encoding of messages in $X$ to the receivers through a noiseless channel that has a capacity of one message per channel use. The objective is to find an optimal encoding scheme, referred to as an \emph{index code}, that satisfies all receiver nodes with the minimum number of transmissions.

\begin{figure}[t]
\begin{center}
\includegraphics[width=3in]{./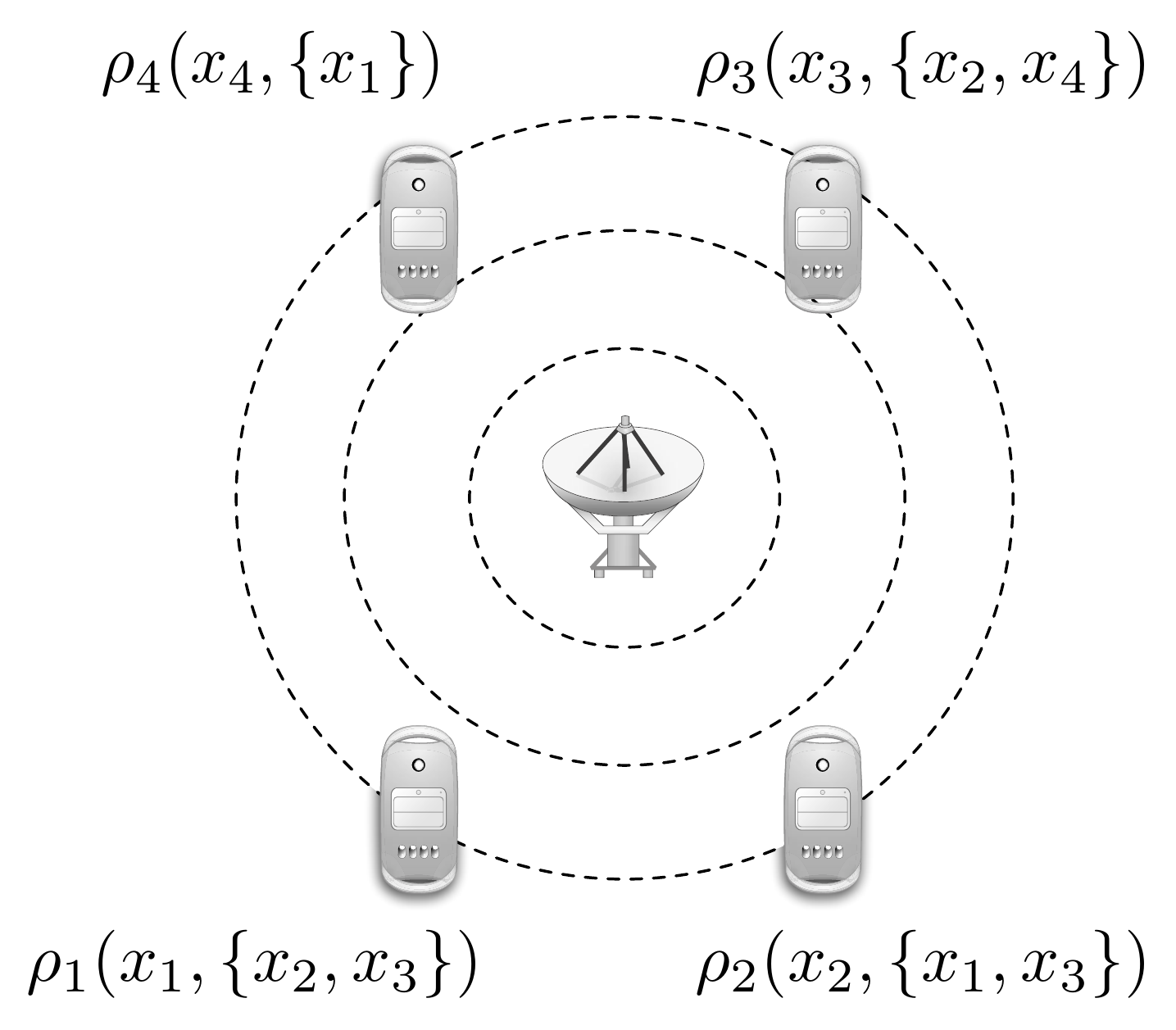}
 \caption{An instance of the index coding problem with four messages and four clients. Each client is represented by a couple (x,H), where $x\in X$ is the packet demanded by the client, and $H\subseteq X$ represent its side information. \label{fig:butterfly}}%
\end{center}
\end{figure}

With a \emph{linear encoding scheme}, all messages in $X$ are elements of a finite field and all encoding operations are linear over that field. Figure \ref{fig:butterfly} depicts an instance of  the index coding problem that includes a sender with four messages $x_1,\dots, x_4$ and four clients. We assume that each message is an element of $GF(2^n)$, represented by $n$ bits. %For each client, we show its `wants'' and ``has'' sets, $W(r)$ and $H(r)$.
Note that the sender can satisfy the demands of all clients, in a straightforward way, by broadcasting all four messages over the wireless channel. The encoding operation achieves a reduction of the number of messages by a factor of two. Indeed, it is sufficient to send just two messages $x_1+x_2+x_3$ and $x_1+x_4$ (all operations are over $GF(2^n)$) to satisfy the requests of all clients. This example demonstrates that by using an efficient encoding scheme, the sender can significantly reduce the number of transmissions which, in turn, results in a reduction in delay and energy consumption.

The above example utilizes a \emph{scalar linear} encoding scheme that performs coding over the original messages. In a \emph{vector} encoding scheme, each message is divided into a number of smaller size messages, referred to as \emph{packets}. The vector encoding scheme combines packets from different messages to minimize the number of transmissions.  With \emph{vector linear} index coding, all packets are elements of a certain finite field $\mathbb{F}$, and each transmitted packet is a linear combination of the original packets. For example, consider the instance depicted in Figure \ref{fig:butterfly}, and suppose that each message $x_i,1\leq i\leq 4$ is divided into two packets,  $x_i^1, x_i^2 \in GF(2^{\bar{n}})$,  of size $\bar{n}=\frac{n}{2}$. Then, a valid vector-linear solution is comprised of four packets $\{x_1^1+x_4^1, x_1^2+x_4^2, x_1^1+x_2^1+x_3^1,  x_1^2+x_2^2+x_3^2\}$.

%The index coding problem can be encountered in many practical settings.  For example, this problem arises in wireless networks with opportunistic listening and  message caching \cite{KKHRM05,WPCPC061} and in satellite networks with caching clients with a low-capacity error-free reverse channel \cite{BK98,BK06}.

\subsection*{Related Work}

Witsenhausen \cite{W76}  considered a related zero-error coding problem with side information. He studied the point-to-point scenario where a transmitter wants to send a random variable $X$ over a noisy channel to a receiver that has prior knowledge of another  random variable $Y$ jointly distributed with $X$. The objective of this problem is to find an encoding scheme that allows the receiver to obtain $X$ with zero error probability. He observed that if the transmitter knows the realizations of $Y$, then the solution can be found in a straightforward way. However,  in the case when the transmitter is oblivious to the realizations of $Y$, then the problem becomes much harder. Specifically, the minimum communication rate in this case is related to the chromatic number of the powers of the channel confusion graph.    Simonyi \cite{S03} considered a more general case with multiple receivers and showed that the obvious lower bound, given by the rate needed by the ``weakest'' receiver, is  attainable. Index Coding can  be considered as a multi-terminal generalization of Witsenhausen's problem, with non-broadcast demands and with the restriction that the side information be a subset of the original one.

The index coding problem has been introduced by Birk and Kol \cite{BBJK06} and was initially motivated by broadcast satellite applications\footnote{Reference \cite{BBJK06} refers to the index coding problem as Informed Source Coding on Demand problem (ISCOD).}. In particular, they developed several heuristic solutions for this problem and proposed protocols for practical implementation in satellite networks.  Bar-Yossef et al.\,  studied the index coding problem from a graph-theoretical perspective \cite{BBJK06}. They showed that the index coding problem is equivalent to finding an algebraic property referred to as the \emph{minrank} of a graph. Finding, the minrank of a graph, however, is an intractable problem \cite{P96}. Lubetzky and Stav \cite{LS07} showed that non-linear scalar codes have a significant advantage over  linear ones by constructing a family of instances with an increasing gap between the optimal number of transmissions required by non-linear and linear codes. Wu et al. \cite{WPCPC061} studied the information-theoretic aspects of the problem with the goal of characterizing the admissible rate region\footnote{Reference \cite{WPCPC061} refers to the Index Coding problem as the Local Mixing Problem.}. Reference \cite{LS08} analyzed the hardness of approximation of the Index Coding problem.  References \cite{RCS07} and \cite{CS08} presented several heuristic solutions based on graph coloring and SAT solvers.

Index Coding can be considered as a special case of the Network Coding problem \cite{ACLY00}. The network coding technique extends the capability of intermediate network nodes by allowing them to mix packets received from different incoming edges. The goal of the network coding problem is to find the maximum rate between the source and destination pairs in a communication network with limited edge capacities. Initial works on the network coding technique focused on establishing \emph{multicast} connections. It was shown in \cite{ACLY00} and \cite{LYC03} that the capacity of a multicast network, i.e., the maximum number of packets that can be sent from the source $s$ to a set $T$ of terminals per time unit, is equal to the minimum capacity of all the  cuts that separates the source $s$ from any terminal $t\in T$.  In a subsequent work, Koetter and M\'{e}dard \cite{KM03} developed an algebraic framework for network coding and investigated linear network codes for directed graphs with cycles. This framework was used by Ho et al. \cite{HKMKE03} to show that linear network codes can be efficiently constructed through a randomized algorithm. Jaggi et al. \cite{JSCEEJT04} proposed a deterministic polynomial-time algorithm for finding feasible network codes in multicast networks.  References \cite{FS07, YLC06} provide a comprehensive overview of network coding.

% Salim, I would delay this discussing - I suggest not to do it in the introduction.

 %or the case here is similar, in some sense, to the case of the 2-coloring problem which can be solved in polynomial time, while the general coloring problem is NP-hard.

\subsection*{Contributions}

In this paper, we study the relation between the index coding problem and the more general network coding problem. In particular, we establish a reduction that maps any instance of the network coding problem to a corresponding instance  of the index coding problem. We show that several important properties of the network coding problem carry over to  index coding. Specifically, by applying our reduction to the network presented in \cite{DFZ05}, we show that vector linear solutions are suboptimal. We also present an instance of the index coding problem  in which  splitting a message into two packets  yields a smaller number of transmissions than a scalar linear solution.

%use this reduction to show that a scalar linear solution may require more transmissions than a vector linear one. In particular, we present an instance of the index coding problem  in which  splitting a message into two packets  yields a smaller number of transmissions than a scalar linear solution. Second, we show that vector linear solutions suboptimal.  In particular, we apply our reduction to the network  presented in \cite{DFZ05} to construct an instance of the index coding problem for which a non-linear solution requires a lower number of transmissions than the vector linear one.

We also study the relation between the index coding problem and  matroid theory. In particular, we present a reduction that maps any matroid to an instance of the index coding problem such that the problem has a special optimal vector linear code that we call \emph{perfect index code} if and only if the matroid has  a multilinear representation. This construction constitutes a means to apply numerous  results in the rich field of matroid theory to index coding, and, in turn, to the network coding problem. Using results on the non-Pappus matroid, we give another example where vector linear codes outperform scalar linear codes.
% Salim - can we claim that if a matroid has a multi-linear representation then the index coding has vector-linear solution?

The rest of the paper is organized as follows. In Section~\ref{sec:model}, we discuss our model and formulate the index and network coding problems. In Section~\ref{sec:Indexoid}, we present a reduction from the network coding problem to the index coding problem. In Section~\ref{sec:Indexoid} we discuss the relation between the index coding problem and matroid theory. In Section~\ref{sec:applications}, we apply our reductions to show the sub-optimality of linear and scalar index codes.
Next,  in Section~\ref{sec:discussion} we discuss the relationship between networks and matroids and introduce
a new  family of networks  with interesting properties. % TODO complete
Finally, conclusions appear in Section~\ref{sec:conclusion}.

\section{Model}\label{sec:model}
In this section, we present a formulation of the network coding and index coding problems.

%Salim - I changed "sources" to "messages" - I hope this is fine with you.

\subsection{Index Coding}
An instance of the index coding problem $\mathcal{I}(X,R)$ includes
\begin{enumerate}
  \item A set of $k$ messages  $X=\{x_1,\dots,x_k\}$,
  \item A set of clients or receivers $R \subseteq\{(x,H); x\in X, H\subseteq  X\setminus\{x\} \}$.
\end{enumerate}
Here, $X$ represents the set of messages available at the sender. Each message $x_i$ belong to a certain alphabet $\Sigma^n$. A client is represented by a pair $(x,H)$, where $x\in X$ is the message required by the client, and $H\subseteq X$ is set of messages available to the client as side information.
% We refer to $H$ as the ``has'' set of a client $(x,H)\in R$.
Note that in our model each client requests exactly one message. This does not incur any loss of generality as any client that requests multiple messages can be substituted by several clients that require a single different message and have the same side information as the original one.

%As in the network coding problem, each message $x_i\in X$ is divided into  $n$
%packets $x_i=(x_{i1},\dots,x_{in})$. We refer to parameter $n$ as the block length of
%the index code.
 Each message $x_i$ can be divided into $n$ packets, and we write $x=(x_{i1},\dots,x_{in}) \in \Sigma^n$. We denote by  $\xi=(x_{11},\dots,x_{1n},\dots,x_{k1},\dots,x_{kn})\in \Sigma^{nk}$.

% TODO change $r$ to $\rho$ in the Introduction (for each client)

%\begin{definition}[Index Code]\label{def:indexcoding}
% An optimal $(n,q)$ index code for $\mathcal{I}(X,R)$ is a function
%\mbox{$f:\Sigma^{nk}\longrightarrow \Sigma^\ell$} that satisfies the following two conditions:
%\begin{itemize}
%	%alex- changed $\rho=(x,H)\in R$ to $\rho(x,H)\in R$
%\item [(I1)]for each client $\rho=(x,H)\in R$, there exists a function
%    \mbox{$\psi_{\rho}: \Sigma^{\ell+n|H|}\longrightarrow \Sigma^n$} such that
%    \mbox{$\psi_{\rho}(f(\xi),(x_i)_{i\in H})=x$};\label{cd:IC1}
%\item [(I2)] $\ell=\ell(n,q)$ is the smallest integer such that (I1) holds
%    for the given  $q$-ary alphabet $\Sigma$ and block length $n$.\label{cd:IC2}
%\end{itemize}
%\end{definition}

\begin{definition}[Index Code]\label{def:indexcoding}
 An  $(n,q)$ index code for $\mathcal{I}(X,R)$ is a function
\mbox{$f:\Sigma^{nk}\longrightarrow \Sigma^c$}, for a certain integer $c$, satisfying that for each client $\rho=(x,H)\in R$, there exists a function
    \mbox{$\psi_{\rho}: \Sigma^{c+n|H|}\longrightarrow \Sigma^n$} such that
    \mbox{$\psi_{\rho}(f(\xi),(x_i)_{x_i\in H})=x$}, $\forall \xi\in \Sigma^{nk}$ .
\end{definition}

We refer to $c$ as the \emph{length} of the index code.  Define $\ell(n,q)$ to be  the smallest integer $c$ such that the above condition holds for the given alphabet size $q$-ary  and block length $n$. If the index code satisfies $c=\ell(n,q)$, it is said to be \emph{optimal}.

% TODO - replace all sources by packets.

We refer to  $\psi_\rho$ as the decoding function for client $\rho$. With a linear index code, the alphabet $\Sigma$ is a field and the functions $f$ and $\psi_\rho$
are linear in  variables $x_{ij}$. If $n=1$ the index code is called a scalar code, and for $n>1$, it is called a vector or block code. Note that in our model a packet can be requested by several clients. This is a slightly more general model than that considered in references \cite{BBJK06} and \cite{LS07} where it was assumed that each message can only be requested by a single client.

% Salim - I removed this sentence - I think that vector codes were considered before -
% Second, and more importantly, the previous references  only study scalar linear codes (vector linear codes are mentioned in the conclusion of \cite{LS07}), whereas we consider here the more general vector linear case.
% Salim, we need to be very careful with this section

Given $n$ and $q$, the index coding problem consists of finding an optimal index code for an index coding instance. for a given    instance $\mathcal{I}(X,R)$ of the index coding problem, we define by $\lambda(n,q)= \ell (n,q)/n$ the transmission rate of
the optimal solution over an alphabet of size $q$. We also denote by $\lambda^*(n,q)$ the minimum rate achieved by a vector linear solution of block length $n$
over the finite field $\mathbb{F}_q$ of $q$ elements. We are interested in the behavior of $\lambda$ and $\lambda^*$ as  functions of $n$ and $q$.

% alex: changed I by Y.
Let $\mu(\mathcal{I})$ be the maximum  of the total number of messages requested by a set of clients with identical side information, i.e.,
\mbox{$\mu(\mathcal{I})=\max_{Y\subseteq X}|\{x_i; (x_i,Y)\in R\}|$}. Then, it is easy to verify that the optimal rate $\lambda(n,q)$ is lower bounded by
$\mu(\mathcal{I})$, independently of the values of $n$ and $q$. To see this, let \mbox{$Y^*=\arg \max_{Y\subseteq X}|\{x_i;\ (x_i,Y)\in R\}|$} and \mbox{$W=\{x_i; (x_i,Y^*)\in R\}$} and remove all clients that do not have the set $Y^*$ as side information. We note that, since $Y^*\cap W=\emptyset$, the minimum transmission rate of the resulting instance is equal to $|W|=\mu(\mathcal{I})$. Since the rate of the resulting instance is lower or equal to $\lambda(n,q)$ it holds that $\lambda(n,q)\geq \mu(\mathcal{I})$.

% Salim - I have removed it for now - we might add it later on.

%\begin{lemma}\label{lem:bound}
%For any instance $\mathcal{I}(X,R)$ of the index coding problem it
%holds that
%\begin{equation}
%\lambda(n,q)\geq \mu(\mathcal{I}).
%\end{equation}
%\end{lemma}
%\begin{proof}
%
%Define  $I^*=\arg \max_{I\subseteq X}|\{x_i;\ (x_i,I)\in R\}|$ and $W=\{x_i; (x_i,I^*)\in R\}.$ So, we can assume that we have a virtual client that demands the sources in $W$, and has the side information $I^*$.
%Let $H(.)$ be Shannon's entropy function where all logarithms are base $q$.
%Writing $H(I^*,W,f)$ in two ways, we get
%
%\begin{equation}\label{eq:entropy}
%\begin{split}
% H(f)&=H(W)+H(I^*|W)+H(f|W,I^*)  \\
%    &-H(I^*|f)-H(W|f,I^*)
%\end{split}
%\end{equation}
%
%
%But, since the demands are  by assumption independent of the side information, we have $H(I^*|W)=H(I^*)$, also the decoding constraint implies $H(W|f,I^*)=0$. Using the inequalities $H(I^*|f)\leq H(I^*)$ and $H(f|W,I^*)\geq 0$ in Eq.~\eqref{eq:entropy}, we obtain that $H(f)\geq  H(X)=n\mu(\mathcal{I}).$
%\end{proof}

% Salim - we need to think whether we need this definition. The problem that I see is that it the "perfectness" is more the property of the problem instance than that of an index code, it might be a little bit confusing for the reviewers. Can we define something like a "perfect instance?" - we can think about it later.
\begin{definition}
Let $\mathcal{I}(X,R)$ be an instance of the index coding problem. Then, an index code for $\mathcal{I}(X,R)$ that achieves  $\lambda(n,q)=\mu(\mathcal{I})$ is referred to as a \emph{perfect index code}.
\end{definition}

Note that the index code for the example in Figure \ref{fig:butterfly} is not perfect, since, in that case, $\lambda=2$ and $\mu=1$.

\subsection{Network Coding}
Let $G(V,E)$ be a directed acyclic graph with vertex set  $V$  and edge set $E\subset V\times V$. For each edge $e(u,v)\in E$, we define the in-degree of $e$ to be the in-degree of its tail
node $u$, and its  out-degree to be the out-degree of its head node $v$. Furthermore,  we
define $\mathcal{P}(e)$ to be the set of the parent edges of $e$, i.e.,
\mbox{$\mathcal{P}(e(u,v))=\{(w,u);\ (w,u)\in E)\}$}.
% Salim - can we change this to \mbox{$\mathcal{P}(e(u,v))=\{(w,u)\in E)\}$}?
%
Let $S\subset E$  be the subset of edges in $E$ of zero in-degree and let $D\subset E$ be the subset of edges of zero out-degree. We refer to edges in $S$ as \emph{input} edges, and those in $D$ as  \emph{output} edges. Also, we define  $m=|E|$ to be the total number of edges, $k=|S|$ be the total number of input edges, and $d=|D|$ be the total number of output edges. Moreover,  we assume that the edges in $E$ are indexed from 1 to $m$ such that  $S =\{e_1,\dots,e_k\}$ and $D=\{e_{m-d+1},\dots,e_m\}$.

We model  a coding network  by a pair $\mathcal{N}(G(V,E),\delta)$ formed by a graph $G(V,E)$ and an onto function \mbox{$\delta: D\longrightarrow S$} from the set of output  edges to the set of input edges. We assume that the tail node of each input edge $e_i$, $i=1,\dots, k$ holds message $x_i$, also denoted as $x(e_i)$.  Each message $x_i$ belongs to a certain alphabet $\Sigma^n$, for a positive integer  $n$.
%there is an information source of unit entropy rate represented by a random variable $x(e_i)$, also denoted as $x_i$. The random variables $x_1,\dots, x_k$ are %independent and uniformly distributed over a  finite alphabet $\Sigma^n$, for a positive integer  $n$.
The edges of the graph represent communication links of unit capacity, i.e., each link can transmit one message per channel use.
% Salim - check this
 The function $\delta$ specifies for each output edge $e_i$, $i=m-d+1,\dots,m$, the source message $x(\delta(e_i))$ required by its head node. We refer to $\delta$ as the \emph{demand function}.  In a vector solution, each message $x_i$ be divided into $n$ packets $(x_{i1},\dots,x_{in})\in \Sigma^n$. We also denote by $\xi=(x_{11},\dots,x_{1n},\dots,x_{k1},\dots,x_{kn})\in \Sigma^{nk}$ the concatenation of all packets at the input edges.

\begin{definition}[Network Code]
 A  $q$-ary \emph{network code} of block length $n$, or an $(n,q)$ network code, for the network $\mathcal{N}(G(V,E),\delta)$ is a collection
 $$\mathcal{C}=\{f_e=(f_e^1,\dots,f_e^n); e\in E,f_e^i:\Sigma^{nk}\longrightarrow \Sigma,  i=1,\dots,n\},$$
of  functions, called \emph{global encoding} functions, indexed by the edges of $G$, that satisfy, for all $\xi\in \Sigma^{nk}$, the following conditions:
 \begin{enumerate}
   \item [(N1)] $f_{e_i}(\xi)=x_i$, for $i=1,\dots,k$;
   \item [(N2)]$f_{e_i}(\xi)=x(\delta(e_i))$, for $i=m-d+1,\dots,m$;
   \item [(N3)] For each $e=(u,v)\in E\setminus S$ with $\mathcal{P}(e)=\{e_1,\dots,e_
   {p_e}\}$, there exists a function \mbox{$\phi_e:\Sigma^{np_e}\longrightarrow
       \Sigma^n$},  referred to as the \emph{local encoding function} of $e$, such that \mbox{$f_e(\xi)=\phi_{e}(f_{e_1}(\xi),\dots,f_{e_{p_e}}(\xi))$},
       where $p_e$ is the in-degree of $e$, and $\mathcal{P}(e)$ is the set
       of parent edges of $e$.
 \end{enumerate}
\end{definition}

%The functions $f_e$ are usually called the global encoding functions, and the
%$\phi_e$'s the local encoding functions of the network code.
%alex-old - removed to save space.

When $n=1$, the network code is referred to as a \emph{scalar} network code. Otherwise, when $n>1$, it is called a \emph{vector} or a \emph{block} network code.
We are  mostly interested here in linear network codes where $\Sigma$ is a finite field $\mathbb{F}$, and all the global and local encoding functions are linear functions of the packets $x_{ij}$. Note that, a scalar linear network code over $GF(p^n)$ will naturally induce a vector linear network code of block length $n$ over $GF(p)$; however, the converse is not necessarily  true.
% alex - added necessarily.

\section{Connection to Network Coding}\label{sec:connec}

\begin{figure}[t]%
%\begin{minipage}{0.97\textwidth}
\begin{center}%
\includegraphics[width=3in]{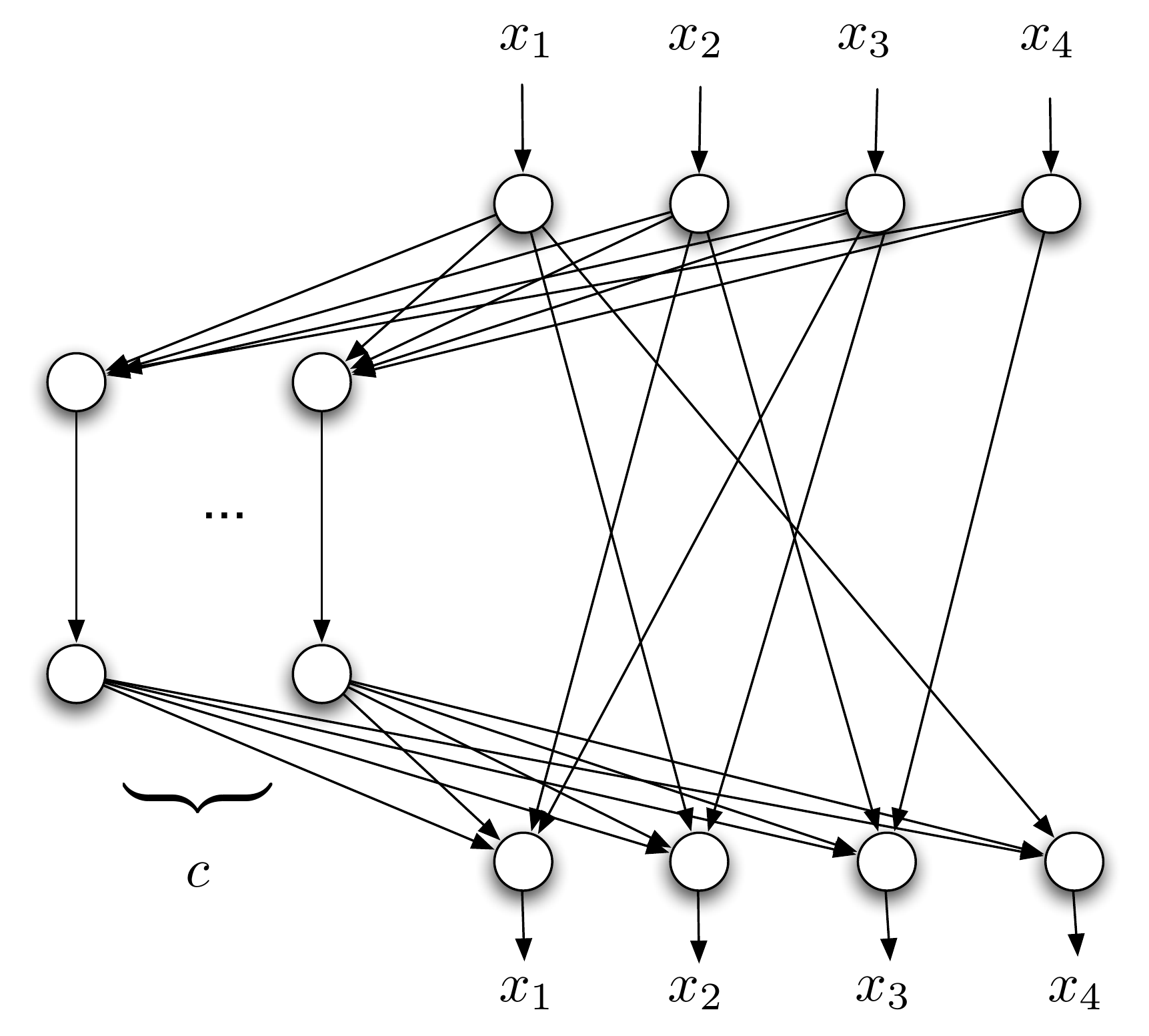}
\end{center}
%\end{minipage}
 \caption{An instance to the network coding problem equivalent to the instance of the index coding problem depicted in Figure \ref{fig:butterfly}.}
\label{fig:indexnet}
\end{figure}

We first note that  network coding is a more general problem than  index coding. Indeed, for every instance of the Index Coding problem and a given integer $c$,  there exists a corresponding instance of the network coding problem that has an  $(n,q)$ network code solution if and only if there exists an $(n,q)$ index code of length $c.n$. For example, Figure \ref{fig:indexnet} depicts the instance of the network coding problem  that corresponds to the instance of the index coding problem presented in Figure~\ref{fig:butterfly}, where the broadcast channel is represented by $c$ ``bottleneck'' edges.

In this section we present a reduction from the network coding problem to the index coding problem showing that two problems  are equivalent. Specifically, for each instance
$\mathcal{N}(G(V,E), \delta)$ of the network coding problem, we construct a corresponding instance $\mathcal{I}_{\mathcal{N}}(Y,R)$ of the index coding problem,
such that $\mathcal{I}_{\mathcal{N}}$ has an $(n,q)$  perfect linear index code  if and only if there exists an $(n,q)$ linear network for $\mathcal{N}$.

\begin{definition}\label{def:1}
Let $\mathcal{N}(G(V,E),\delta)$  be an instance of the Network Coding
problem. We form an instance $\mathcal{I}_{\mathcal{N}}(Y,R)$ of the Index
Coding problem as follows:
\begin{enumerate}

  \item The set of messages $Y$ includes a message $y_i$ for each edge $e_i\in E$
      and all the  messages $x_i\in X$, i.e., \mbox{$Y=\{y_1,\dots,y_m\}\cup
      X$};
  \item The set of clients $R= R_1 \cup\dots\cup R_5$ defined as
      follows:
  \begin{enumerate}
  \item  $R_1=\{ (x_i,\{ y_i\}); e_i\in S\}$
  \item  $R_2=\{ (y_i,\{ x_i\});e_i\in S\}$
  \item $R_3=\{(y_i, \{ y_j; e_j\in\mathcal{P}(e_i)\}); e_i\in
      E\setminus S\}$
  \item $R_4=\{(x(\delta(e_i)),\{y_i\}); e_i\in D\}$
  \item $R_5=\{ (y_i, X); i=1,\dots,m \}$
  \end{enumerate}

\end{enumerate}
\end{definition}

It is easy to verify that  instance $\mathcal{I}_{\mathcal{N}}(Y,R)$ satisfies $\mu(\mathcal{I}_{\mathcal{N}})=m$.

\begin{theorem}\label{th:equivalence}
Let $\mathcal{N}(G(V,E),X,\delta)$ be an instance of the network coding problem,
and let $\mathcal{I}_{\mathcal{N}}(Y,R)$ be the corresponding instance of the index coding problem, as defined above. Then, there exists an $(n,q)$ perfect linear  index code  for
$\mathcal{I}_{\mathcal{N}}$,   if and only if, there exists a linear $(n,q)$ network code for $\mathcal{N}$.
\end{theorem}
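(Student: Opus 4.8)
The plan is to prove both implications, with the reverse direction ($\Leftarrow$) being a short explicit construction and the forward direction ($\Rightarrow$) carrying the real weight. Since both codes are \emph{linear}, I would first fix a uniform viewpoint: the whole index-code transmission is $W = A\,\xi_Y$ for a matrix $A$ over $\mathbb{F}_q$, where $\xi_Y$ stacks the $n$ packets of all $m+k$ messages in $Y$. The single tool I would set up is the standard linear-algebra criterion for zero-error decoding: a client $(z,H)$ recovers $z$ from $W$ and its side information if and only if $\ker A \cap \ker \Pi_H \subseteq \ker \Pi_z$, where $\Pi_H$ and $\Pi_z$ are the coordinate projections onto the $H$-messages and onto $z$. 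Each of the five client classes then becomes a single kernel-containment statement.

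For $\Leftarrow$, given a linear $(n,q)$ network code with global encoding functions $f_{e_\ell}$ (each linear in $X$), I would broadcast the $m$ blocks $W_\ell = y_\ell - f_{e_\ell}(X)$, $\ell=1,\dots,m$. This has length $nm$, hence rate $m = \mu(\mathcal{I}_{\mathcal{N}})$, so once validity is shown it is automatically perfect. Validity is checked class by class: $R_5$ decodes $y_i$ because $X$ is known and $f_{e_i}(X)$ is then computable; $R_1,R_2,R_4$ decode because on input and output edges one has $f_{e_i}(X)=x_i$ and $f_{e_i}(X)=x(\delta(e_i))$ by N1 and N2, turning $W_i$ into $y_i-x_i$ and $y_i-x(\delta(e_i))$ respectively; and $R_3$ decodes $y_i$ by first reconstructing the parent values $f_{e_j}(X)=y_j-W_j$ from its side information and then applying the local function $\phi_{e_i}$ guaranteed by N3.

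For $\Rightarrow$, I would first note that ``perfect'' forces length exactly $c=nm$ (since $\lambda \ge \mu = m$ always and perfect means equality), so $A$ is $nm \times n(m+k)$ with block decomposition $A=[\,\mathbf{B} \mid \mathbf{C}\,]$ for the $y$- and $X$-parts. Applying the decoding criterion to the $R_5$ clients gives $\ker \mathbf{B} = \{0\}$, so the square matrix $\mathbf{B}$ is invertible; replacing $W$ by $\mathbf{B}^{-1}W$ (an invertible, hence decoding-preserving, re-encoding) normalizes the code to $W_\ell = y_\ell - \tilde{f}_\ell(X)$ with each $\tilde{f}_\ell$ linear in $X$. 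Defining the candidate network code by $f_{e_\ell} := \tilde{f}_\ell$, I would read the axioms off the remaining classes: $R_3$ yields $\bigcap_{j\in\mathcal{P}(e_i)}\ker\tilde{f}_j \subseteq \ker\tilde{f}_i$, which is exactly the statement that $\tilde{f}_i$ factors through the parent functions, i.e.\ N3 with a well-defined local map $\phi_{e_i}$; $R_1$ together with $R_2$ forces $\ker\tilde{f}_i = \ker\Pi_{x_i}$ on input edges, so $\tilde{f}_i = M_i x_i$ for an invertible $M_i$; and $R_4$ forces $\ker\tilde{f}_i \subseteq \ker\Pi_{x(\delta(e_i))}$ on output edges, so the demanded message is linearly recoverable from $\tilde{f}_i$.

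The last step upgrades these to the exact boundary conditions N1 and N2, and I expect this to be the main obstacle. The key fact is that an invertible relabeling of the symbol carried on any single edge preserves validity of a linear network code, since it only reparametrizes that edge and the local functions of its children. Output edges have no children, so composing $\tilde{f}_i$ with the recovery map extracted from $R_4$ gives N2 for free; input edges are relabeled by $M_i^{-1}$, with the factor $M_i$ absorbed into the local functions of their children, giving N1 while leaving every interior global function — and hence the N3 relations — intact. Within this direction the genuinely delicate points are (i) justifying that the kernel-containment criterion captures zero-error linear decoding exactly, and (ii) verifying that these boundary relabelings can be performed simultaneously and consistently without disturbing the structure established in the interior.
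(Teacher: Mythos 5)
Your proposal is correct and follows essentially the same route as the paper's proof: the same construction $y_\ell \pm f_{e_\ell}(X)$ for the easy direction, and for the converse the same sequence of steps — use $R_5$ to show the $y$-part matrix is invertible, normalize by its inverse, read off the input/output-edge structure from $R_1, R_2, R_4$, and extract the local encoding relations (N3) from $R_3$ — with your kernel-containment criterion playing the role of the paper's explicit decoding matrices $T_{ij}$. Your final step, absorbing the invertible matrices $M_i$ into the children's local functions to get N1/N2 exactly, is in fact handled more explicitly in your write-up than in the paper, which defines $f_{e_i}=x_i$ and $f_{e_i}=x(\delta(e_i))$ directly and leaves that absorption implicit.
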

\begin{proof}
Suppose there is a linear $(n,q)$ network code $ C=\{f_e(X);
f_e:(\mathbb{F}_q^n)^k\rightarrow \mathbb{F}_q^n, e\in E\}$ for $\mathcal{N}$ over the finite field $\mathbb{F}_q$ of size $q$ for some integer $n$.

Define $g:(\mathbb{F}_q^n)^{m+k}\rightarrow (\mathbb{F}_q^n)^m$ such that $\forall
Z=(x_1,\dots,x_k,y_1,\dots,y_m)\in (\mathbb{F}_q^n)^{m+k}, g(Z)=(g_1(Z),\dots,g_m(Z))$
where $g_i(Z)=y_i+f_{e_i}(X)$, $i=1,\dots, m$. More specifically, we have

\begin{align*}
 g_i(Z)&=y_i+x_i\quad  &i=1,\dots,k,\\
g_i(Z)&=y_i+f_{e_i}(X) \quad &i=k+1,\dots,m-d,\\
g_i(Z)&=y_i+x(\delta(e_i)) \quad &i=m-d+1,\dots,m.
\end{align*}
%Where "+" designate addition modulo $q$.

Next, we show that $g(Z)$ is in fact an index code for
$\mathcal{I}_{\mathcal{N}}$ by proving the existence of the decoding functions. We consider the following five cases:
\begin{enumerate}
  \item $\forall \rho=(x_i,\{y_i\})\in R_1, \psi_{\rho}=g_i(Z)-y_i$,
  \item $\forall \rho=(y_i,\{x_i\})\in R_2, \psi_{\rho}=g_i(Z)-x_i$,
  \item $\forall \rho=(y_i, \{y_{i_1},\dots, y_{i_p}\})\in R_3$, since $C$ is
      a linear network code for $\mathcal{N}$, there exists a linear function $\phi_{e_i}$
      such that $f_{e_i}(X)=\phi_{e_i}(f_{e_{i_1}}(X),\dots,
      f_{e_{i_p}}(X))$. Thus,
      \mbox{$\psi_{\rho}=g_i(Z)-\phi_{e_i}(g_{{i_1}}(Z)-y_{i_1},\dots,g_{{i_p}}(Z)-y_{i_p})$},
   \item $\forall \rho=(x(\delta(e_i)),\{y_i\})\in R_4, e_i\in D,
       \psi_{\rho}=g_i(Z)-y_i$,
   \item $\forall \rho=(y_i,X)\in R_5,\psi_{\rho}=g_i(Z)-f_{e_{i}}(X)$.
\end{enumerate}

To prove the converse, we  assume that $g: (\mathbb{F}_q^n)^{m+k} \longrightarrow (\mathbb{F}_q^n)^m$ is a perfect linear
$(n,q)$ index code for $I_{\mathcal{N}}$ over the field $\mathbb{F}_q$. Again, we denote
$Z=(x_1,\dots,x_k,y_1,\dots,y_m)\in (\mathbb{F}_q^n)^{m+k},$ and
 $g(Z)=(g_1(Z),\dots,g_m(Z)), $ $x_i,y_i$ and  $g_i(Z)\in \mathbb{F}_q^n.$ We also
 write
 $$g_i(Z)=\sum_{j=1}^{k}x_jA_{ij}+\sum_{j=1}^m y_jB_{ij},$$

for  $i=1,\dots, m$, and $A_{ij}, B_{ij}\in M_{\mathbb{F}_q}(n,n)$, where $M_{\mathbb{F}_q}(n,n)$ is  the  set of $n\times n$ matrices with elements in $\mathbb{F}_q$.

The functions $\psi_\rho$  exist for all $\rho\in R_5$ if and only if the matrix
$M=[B_{ij}]\in  M_{\mathbb{F}_q}(nm,nm)$, which has  the matrix $B_{ij}$ as a
block submatrix in the $(i,j)$th position,  is invertible. Define $h:
(\mathbb{F}_q^n)^{m+k} \longrightarrow (\mathbb{F}_q^n)^m$, such that $h(Z)=g(Z)M^{-1},
\forall Z\in (\mathbb{F}_q^n)^{m+k} $. So, we obtain
$$h_i(Z)=y_i+\sum_{j=1}^kx_j C_{ij}, i=1,\dots, m,$$
where $C_{ij}\in  M_{\mathbb{F}_q}(n,n)$.
We note that  $h(Z)$ is a valid index code for $\mathcal{I}_{\mathcal{N}}$.  In
fact, $\forall \rho=(x,H)\in R$ with $\psi_\rho(g,(z)_{z\in H})=x$,
$\psi_\rho^\prime(h,(z)_{z\in H})=\psi_\rho(hM,(z)_{z\in H}))$ is a valid decoding function corresponding to the client $\rho$ and the index code $h(Z)$.

For all $\rho\in\ R_1\cup R_4$, $ \psi_\rho^\prime$ exists  iff for
$i=1,\dots,k,m-d+1,\dots, m, j=1\dots k$ and $j\neq i$ it holds that  $C_{ij}=[0]\in
M_{\mathbb{F}_q}(n,n)$ and $C_{ii}$ is invertible, where $[0]$ denotes the all
zeros matrix. This implies that
\begin{equation}\label{eq:indexcod}
\begin{split}
 h_i(Z)&=y_i+ x_iC_{ii}, i=1,\dots, k\\
    h_i(Z)&=y_i+\sum_{j=1}^k x_j C_{ij}, i=k+1,\dots, m-d\\
    h_i(Z)&=y_i+ x(\delta(e_i))C_{ii}, i=m-d+1,\dots,m
\end{split}
\end{equation}
Next, we define the functions $f_{e_i}:(\mathbb{F}_q^n)^k \longrightarrow \mathbb{F}_q^n,
e_i\in E$ as follows:
\begin{enumerate}
\item $f_{e_i}(X)=x_i$, for $ i=1,\dots,k$
\item $f_{e_i}(X)=\sum_{j=1}^k x_jC_{ij}$, for $ i=k+1,\dots,m-d$
\item $f_{e_i}(X)=x(\delta(e_i))$, for $ i=m-d+1,\dots, m$.
\end{enumerate}

We will show that  $C=\{f_{e_i}; e_i\in E\}$ is a linear $(n,q)$ network code for
$\mathcal{N}$ by proving that condition N3 holds.

Let $e_i$ be an edge in $E\setminus S$ with  the set of parent edges
$\mathcal{P}(e_i)=\{e_{i_1},\dots,e_{i_p}\}$ . We denote by
$I_i=\{i_1,\dots,i_p\}$
%, $I^\prime_i=I\cap\{1,\dots,k\} $,
and $\rho_i=(y_i,\{y_{i_1},\dots,y_{i_p}\})\in R_3$. Then, there is a linear function
$\psi_{\rho_i}^\prime$ such that $y_i= \psi_{\rho_i}^\prime
(h_1,\dots,h_m,y_{i_1},\dots, y_{i_p})$. Hence, there exist matrices $T_{ij},
T^\prime_{i\alpha}\in M_{\mathbb{F}_q}(n,n)$ such that
\begin{equation}\label{eq:int}
y_i=\sum_{j=1}^mh_j T_{ij}+\sum_{\alpha\in I_i}y_\alpha T^\prime_{i\alpha}
\end{equation}
 Substituting the expressions of the $h_j$'s given by Eq.~\eqref{eq:indexcod} in Eq.~\eqref{eq:int}, we get that the following:

\begin{itemize}
  \item $T_{ii}$ is the  identity matrix,
  \item $T^\prime_{i\alpha}=-T_{i\alpha}\forall \alpha\in I_i$,
  \item $T_{ij}=[0]\ \forall j\notin I_i\cup\{i\}$.
\end{itemize}

%
% \begin{align*}
%    \sum_{h=1}^kx_{h}C_{ih}&=-\sum_{j\in I^\prime_i} x_jC_{jj}T_{ij}
%  -\sum_{j\in I_i\setminus I_i^\prime}(\sum_{h=1}^kx_hC_{jh})T_{ij},
%  \text{ if } e_i\notin D\\
% \delta(e_i)C_{ii}&=-\sum_{j\in I^\prime_i} x_jC_{jj}T_{ij}
%  -\sum_{j\in I_i\setminus I_i^\prime}(\sum_{h=1}^kx_hC_{jh})T_{ij}\text{ if } e_i\in
%  D.
%  \end{align*}
  Therefore, we obtain $$f_{e_i}=-\sum_{\alpha\in I_i} f_{e_\alpha  }T_{i\alpha}, \forall e_i\in E\setminus S,$$ and $C$ is a feasible network code for $\mathcal{N}$.
\end{proof}

\begin{lemma}\label{ref:nonlinear}
Let $\mathcal{N}(G(V,E),\delta)$ be an instance of the Network Coding problem,
and let $\mathcal{I}_{\mathcal{N}}(Y,R)$ be the corresponding index problem. If there is an
$(n,q)$ network code (not necessarily linear)
 for $\mathcal{N}$, then there is a perfect $(n,q)$ index code
for $\mathcal{I}_{\mathcal{N}}$.
%with \mbox{$\lambda(n,q)=\mu(\mathcal{I}_{\mathcal{N}})=m$}, where $m=|E|$.
\end{lemma}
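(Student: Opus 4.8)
The plan is to reuse the forward (network-code $\Rightarrow$ index-code) direction of Theorem~\ref{th:equivalence} essentially verbatim, and to observe that that argument never actually exploits linearity. First I would endow the alphabet $\Sigma$ with the structure of an arbitrary finite abelian group of order $q$ (for instance $\mathbb{Z}_q$, or $\mathbb{F}_q$ when $q$ is a prime power), so that $\Sigma^n$ becomes an abelian group under component-wise addition. This group structure, used only to add and subtract symbols, is the sole algebraic ingredient the construction needs; the encoding maps themselves are treated as black boxes.

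Given an arbitrary, possibly nonlinear, $(n,q)$ network code $C=\{f_e;\, e\in E\}$ for $\mathcal{N}$, I would define the candidate index code $g:(\Sigma^n)^{m+k}\to(\Sigma^n)^m$ by $g_i(Z)=y_i+f_{e_i}(X)$ for $i=1,\dots,m$, exactly as in the theorem, where $Z=(x_1,\dots,x_k,y_1,\dots,y_m)$. This code transmits $m$ symbols of $\Sigma^n$, so its length is $nm$ and its rate is at most $m$. Since the earlier remark gives $\mu(\mathcal{I}_{\mathcal{N}})=m$, and since $\lambda(n,q)\geq\mu(\mathcal{I}_{\mathcal{N}})$ holds unconditionally, merely showing that $g$ is a \emph{valid} index code forces $\lambda(n,q)=m=\mu(\mathcal{I}_{\mathcal{N}})$, i.e. $g$ is automatically perfect. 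Thus the whole task collapses to exhibiting a decoder for each of the five client families $R_1,\dots,R_5$.

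Then I would supply those decoders, which are the same maps appearing in the theorem. Clients in $R_1$, $R_2$ and $R_4$ recover their demand by a single group subtraction, invoking properties N1 and N2 to identify $f_{e_i}(X)$ with $x_i$ or with $x(\delta(e_i))$. A client $\rho=(y_i,\{y_{i_1},\dots,y_{i_p}\})\in R_3$ first computes $f_{e_{i_j}}(X)=g_{i_j}(Z)-y_{i_j}$ from its side information and the broadcast, applies the local encoding function $\phi_{e_i}$ to obtain $f_{e_i}(X)$, and then recovers $y_i=g_i(Z)-f_{e_i}(X)$; a client in $R_5$ knows all of $X$, computes $f_{e_i}(X)$ directly, and subtracts. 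The point I want to stress is that every one of these decoders uses only subtraction in $\Sigma^n$ together with the maps $\phi_{e_i}$ and $f_{e_i}$ used as black boxes, so nonlinearity of the network code causes no difficulty whatsoever.

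The only thing I would flag as needing care — and it is bookkeeping rather than a genuine obstacle — is the single spot where the word ``linear'' surfaced in the theorem, namely the invocation of $\phi_{e_i}$ for the $R_3$ clients. Here I would note that the existence of a local encoding function $\phi_{e_i}$ with $f_{e_i}(X)=\phi_{e_i}(f_{e_{i_1}}(X),\dots,f_{e_{i_p}}(X))$ is guaranteed by condition N3 for \emph{every} network code, linear or not, so the forward implication carries over unchanged. The converse direction of Theorem~\ref{th:equivalence}, which genuinely relies on inverting the block matrix $M=[B_{ij}]$ and hence on linear algebra, is simply not asserted by this lemma.
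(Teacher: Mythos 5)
Your proof is correct and takes essentially the same route as the paper's, which likewise sets $g_i(Z)=y_i+f_{e_i}(X)$ and observes that the decoding argument from the forward direction of Theorem~\ref{th:equivalence} carries over verbatim because condition (N3) supplies the local encoding functions $\phi_{e_i}$ for \emph{any} network code, linear or not. If anything you are slightly more careful than the paper, which writes ``addition in $GF(q)^n$'' (tacitly assuming $q$ is a prime power) where your choice of an arbitrary abelian group of order $q$ such as $\mathbb{Z}_q$ is the right level of generality, and you make explicit the perfectness argument ($\lambda(n,q)=m=\mu(\mathcal{I}_{\mathcal{N}})$ via the unconditional lower bound) that the paper leaves implicit.
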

\begin{proof}
Suppose there is an $(n,q)$ network code $ C=\{f_e(X); f_e:(\Sigma^n)^k\rightarrow \Sigma^n, e\in E\}$ for $\mathcal{N}$ over the q-ary alphabet $\Sigma$. Without loss of generality, we assume that $\Sigma=\{0,1,\dots,q-1\}$.

Define $g:(\Sigma^n)^{m+k}\rightarrow (\Sigma^n)^m$ such that $\forall
Z=(x_1,\dots,x_k,y_1,\dots,y_m)\in (\Sigma^n)^{m+k}, g(Z)=(g_1(Z),\dots,g_m(Z))$
with

$$g_i(Z)=y_i+ f_{e_i}(X), \quad i=1,\dots,m$$

where ``+'' designates addition in $GF(q)^n$. Then, the same argument of the previous proof holds similarly here, and $g$ is an index code for $\mathcal{I}_{\mathcal{N}}$.
\end{proof}

\section{Connection To Matroid Theory}\label{sec:Indexoid}

% alex - changed "couple" to a "pair"
A matroid $\mathcal{M}(Y,r)$ is a couple formed by a  set  $Y$ and a
function \mbox{$r:2^Y\longrightarrow \mathbb{N}_0$}, where $2^Y$ is the power
set of $Y$ and $\mathbb{N}_0$ is the set of non-negative integer numbers $\{0,1,2,\dots\}$, satisfying the following three conditions:

\begin{enumerate}
  \item[(M1)] $r(A)\leq |A|$  for $\forall A\subseteq Y$;
  \item [(M2)]$r(A)\leq r(B)$ for $\forall A\subseteq B\subseteq Y$;
  \item [(M3)]$r(A\cup B)+r(A\cap B)\leq r(A)+r(B)$ for $\forall A,B\subseteq Y.$
\end{enumerate}

The set $Y$ is called the \emph{ground set} of the matroid $\mathcal{M}$.
 %In this paper, we will assume, without loss of generality, that $Y$ is a finite set of random variables. The function $r$ is called the \emph{rank function} of the matroid.
% Salim - I am concerned with "$Y$ is a finite set of random variables."- let us talk about it.
The function $r$ is called the \emph{rank function} of the matroid. The rank $r_\mathcal{M}$ of  the matroid $\mathcal{M}$ is defined as $r_\mathcal{M}=r(Y)$.

We refer to $B\subseteq Y$ as an \emph{independent} set if $r(B)=|B|$,  otherwise, it is referred to as a  \emph{dependent} set. A maximal independent set is referred to  as a \emph{basis}. It can be shown that all bases in a matroid have the same number of elements. In fact, for any basis $B$, it holds that $r(B)=|B|=r_\mathcal{M}$. A minimal dependent subset $C\subseteq Y$ is referred to as a \emph{circuit}. For each element  $c$ of $C$ it holds that $r(C\setminus\{c\})=|C|-1=r(C)$.
%alex -changed - element in the circuit
We define  $\mathfrak{B}(\mathcal{M})$ to be the set of all the bases  of the matroid $\mathcal{M}$, and $\mathfrak{C}(\mathcal{M})$ be the set of all circuits of $\mathcal{M}$.

Matroid theory is a well studied topic in discrete mathematics. References \cite{OX93} and \cite{Welsh76} provide a comprehensive discussion
on this subject. Linear and multilinear representations of matroids over finite fields are major topics in  matroid theory (see \cite[Chapter 6]{OX93},  \cite{SA98}, and \cite{M99}).

\begin{definition}

Let $Y=\{y_1,\dots,y_m\}$ be a set  whose elements are indexed by the integers from 1 to $m$. For any collection of $m$ matrices $M_1,\dots,M_m \in\mathbb{M}_{\mathbb{F}}(n,k)$, and any subset  $I=\{y_{i_1},\dots,y_{i_\delta}\}\subseteq Y$, with $i_1<\dots<i_\delta$, define $$M_I=[M_{i_1}|\dots|M_{i_\delta}]\in \mathbb{M}_{\mathbb{F}}(n,\delta k).$$
 That is the matrix $M_I$  obtained by concatenating matrices
$M_{i_1},\dots,M_{i_\delta}$ from left to right in the increasing order of the indices $i_1,\dots,i_\delta$.
\end{definition}

%Let $M_1,\dots,M_m \in\mathbb{M}_{\mathbb{F}}(n,k)$, the set of $n\times k$ matrices over the field $\mathbb{F}$, for any positive integers $k$, $m$, and $n$.
%For any set $Y=\{y_1,\dots,y_m\}$, whose elements are indexed by the integers from 1 to $m$, and any subset $I=\{y_{i_1},\dots,y_{i_\delta}\}\subseteq Y$, we define $M_I=[M_{i_1}|\dots|M_{i_\delta}]\in \mathbb{M}_{\mathbb{F}}(n,\delta k)$,  $i_1<\dots<i_\delta$. That is matrix $M_I$ is obtained by concatenating matrixes
%$M_{i_1},\dots,M_{i_\delta}$ from left to right in the increasing order of indices $i_1,\dots,i_\delta$.

\begin{definition}
Let $\mathcal{M}(Y,r)$ be a matroid of rank \mbox{$r_\mathcal{M}=k$} on the ground set $Y=\{y_1,\dots,y_m\}$. The matroid $\mathcal{M}$ is said to have a multilinear representation of dimension $n$, or an $n$-linear representation, over a field $\mathbb{F}$, if
there exist matrices $M_1,\dots, M_m\in \mathbb{M}_{\mathbb{F}}(kn,n)$ such that, $\forall I\subseteq Y,$
\begin{equation}\label{eq:rep}
 \rank (M_I)=n\cdot r(I).
\end{equation}
\end{definition}

Linear representation corresponding to the  case of $n=1$ is the most studied case in matroid theory, see for example \cite[Chapter 6]{OX93}. Multilinear representation is a generalization of this concept from vectors  to vector spaces, and was discussed in \cite{SA98,M99}.

\begin{figure}[t]
\begin{center}
	\includegraphics[width=2.5in]{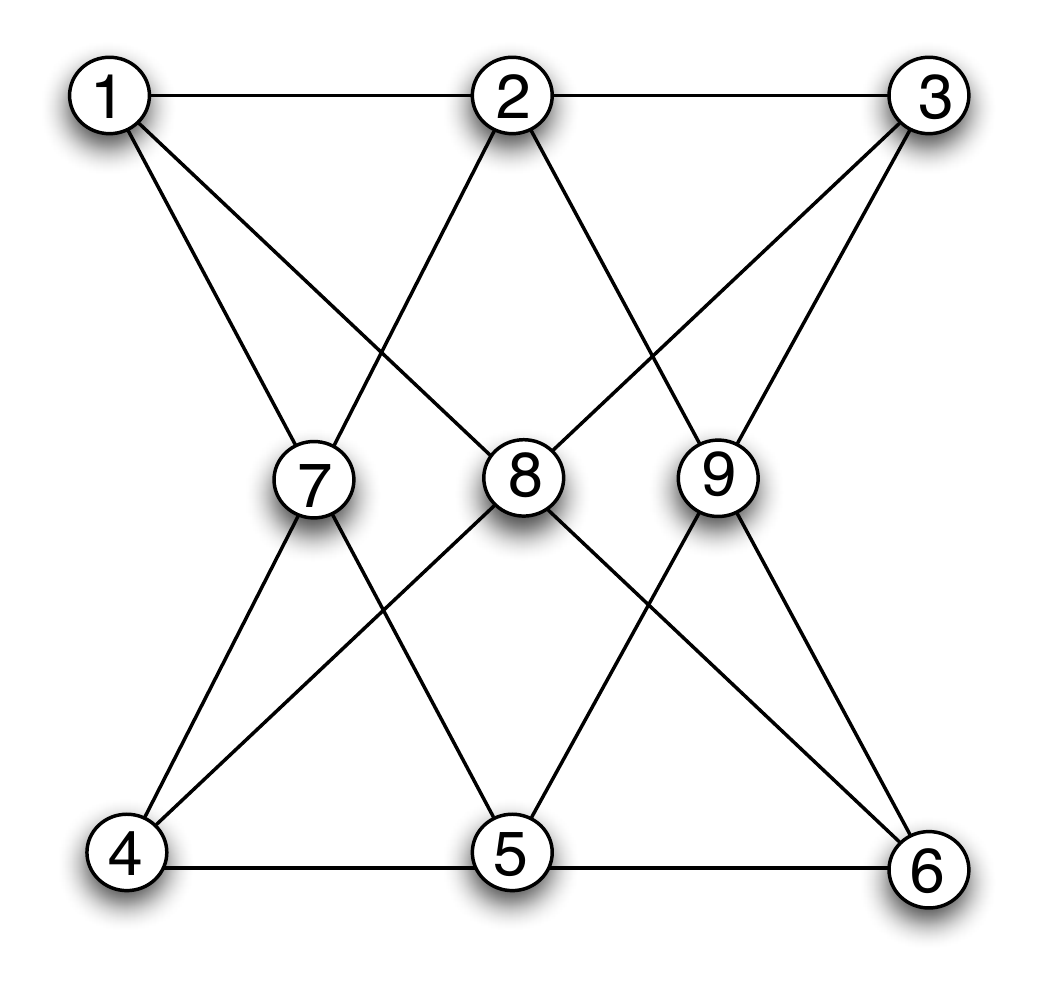}
 \caption{ A graphical representation of the non-Pappus matroid  of rank 3 \cite[p.43]{OX93}. Cycles are
 represented by straight lines.} \label{fig:nonPappusMat}
\end{center}
\end{figure}

\begin{example}\label{ex:uniform}
 The uniform matroid $U_{2,3}$ is defined on a ground set $Y=\{y_1,y_2,y_3\}$ of three elements, such that $\forall I\subseteq Y$ and $|I|\leq 2,r(I)=|I|$, and $r(Y)=2$. It is easy to verify that matrices $M_1=[0\quad 1]^T, M_2=[0\quad 1]^T, M_3=[1\quad 1]^T$ form a linear representation of $U_{2,3}$ of dimension 1 over any field. This will automatically induce a multi-linear representation  of dimension 2, for instance, of  $U_{2,3}$  over any field:
 \begin{equation*}
   M_1=\begin{bmatrix}
          1 & 0 \\
          0 & 0 \\
          0 & 1 \\
          0 & 0 \\
        \end{bmatrix},
    M_2=\begin{bmatrix}
          0 & 0 \\
          1 & 0 \\
          0 & 0 \\
          0 & 1 \\
        \end{bmatrix},
        M_3=\begin{bmatrix}
          1 & 0 \\
          1 & 0 \\
          0 & 1 \\
          0 & 1 \\
        \end{bmatrix}
 \end{equation*}
\end{example}

\begin{example}\label{ex:non-Pappus} The non-Pappus matroid (see e.g., \cite[\S 1.5]{OX93}) $\mathcal{M}_{np}(Y,r)$ is defined over a ground set $Y=\{y_1,\dots,y_9\}$ and can be represented geometrically as shown in Figure \ref{fig:nonPappusMat}. Let $Y_0=\{\{1,2,3\},\{1,5,7\},\{3,5,9\},\{2,4,7\} ,\{4,5,6\},$ $\{2,6,9\},\{1,6,8\},\{3,4,8\} \}$. The rank function of the non-Pappus matroid is  $r(I)=\min (|I|,3)$  $\forall I\in 2^Y\setminus Y_0$, and $r(I)=2$ for $\forall I\in Y_0$. Note that $Y_0$ is the set of circuits of the non-Pappus matroid.

It is known from Pappus theorem \cite[p.173]{OX93}, that the  non-Pappus matroid is not linearly representable over any field. However, it was shown in \cite{SA98} and \cite{M99}, that  it has a 2-linear representation over $GF(3)$, given below by the following $6\times 2$ matrices $M_1,\dots,M_9$:
\small
\begin{equation}\label{eq:NPrep}
   [M_1|\dots|M_9]=\begin{pmatrix}
                     10 & 10 & 00 &  10& 00 & 10 & 10 & 10&  00\\
                     01 & 01 & 00 & 01 & 00 & 01 & 01 & 01 & 00 \\
                     00 & 00 & 00 & 10 & 10 & 21 & 01 & 10 & 10 \\
                     00 & 00 & 00 & 02 & 01 & 20 & 12 & 02 & 01 \\
                     00 & 10 & 10 & 01 & 00 & 01 & 00 & 11 & 10 \\
                     00 & 01 & 01 & 21 & 00 & 21& 00 & 10 & 01 \\
                   \end{pmatrix}.
\end{equation}
\normalsize
\end{example}

In the rest of this section, we present a reduction from the matroid representation problem to the index coding problem.

\begin{definition} \label{def:matindex}
Given a matroid $\mathcal{\mathcal{M}}(Y,r)$ of rank $k$ over ground set
$Y=\{y_1,\dots, y_m\}$, we define the corresponding Index Coding
problem $\mathcal{I}_\mathcal{\mathcal{M}}(Z,R)$ as follows:
\begin{enumerate}
  \item $Z=Y\cup X$, where $X= \{x_1,\dots,x_k\}$,
   \item $R=R_1\cup R_2\cup R_3$ where
  \begin{enumerate}
  \item  $R_1=\{ (x_i,B);  B\in \mathfrak{B}(\mathcal{M}), i=1,\dots,k\}$
  \item  $R_2=\{ (y,C\setminus\{y\});C\in \mathfrak{C}(\mathcal{M}), y\in C\}$
  \item $R_3=\{(y_i,X); i=1,\dots,m\}$
  \end{enumerate}
\end{enumerate}
\end{definition}
Note that $\mu(\mathcal{I}_\mathcal{M})=m.$

\begin{theorem}\label{th:equivalence2}
Let $\mathcal{M}(Y,r)$ be a matroid on the set \mbox{$Y=\{y_1,\dots,y_m\}$}, and $\mathcal{I}_{\mathcal{M}}(Z,R)$ be its corresponding Index Coding problem.  Then, the matroid $\mathcal{M}$ has an $n$-linear representation over $\mathbb{F}_q$ if and only if there exists a perfect linear $(n,q)$ index code for $\mathcal{I}_{\mathcal{M}}$.
\end{theorem}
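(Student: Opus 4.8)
The plan is to set up a dictionary between a multilinear representation $M_1,\dots,M_m$ of $\mathcal{M}$ and the ``message-coefficient'' matrices of a suitably normalized index code, and then to show that, under this dictionary, the decodability requirements of the three client families translate exactly into the rank condition $\rank(M_I)=n\cdot r(I)$. The clients of $R_3$ will serve only to force the code into a canonical shape; the clients of $R_1$ (indexed by bases) will encode full-rank conditions; and the clients of $R_2$ (indexed by circuits) will encode column-space containments. Throughout, given matrices $M_i\in\mathbb{M}_{\mathbb{F}_q}(kn,n)$ I write $M_i^{(1)},\dots,M_i^{(k)}$ for the $k$ consecutive $n\times n$ row-blocks of $M_i$, so that for the row vector $X=(x_1,\dots,x_k)$ one has $X M_i=\sum_{l=1}^k x_l M_i^{(l)}$, and I write $M_B=[M_{j_1}|\cdots|M_{j_k}]$ for the horizontal concatenation over a set $B=\{y_{j_1},\dots,y_{j_k}\}$.

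For the direction ``representation $\Rightarrow$ perfect code'', I would, given the $M_i$, transmit the $mn$ symbols $h_i=y_i+\sum_{l=1}^k x_l M_i^{(l)}$, $i=1,\dots,m$. This has rate $m=\mu(\mathcal{I}_\mathcal{M})$, so it is perfect as soon as it is feasible, and I would check feasibility family by family. An $R_3$ client $(y_i,X)$ knows every $x_l$, subtracts $\sum_l x_l M_i^{(l)}$ from $h_i$, and reads off $y_i$. The $k$ clients $\{(x_i,B)\}_i$ sharing a common basis $B=\{y_{j_1},\dots,y_{j_k}\}$ can form $h_{j_t}-y_{j_t}=X M_{j_t}$ for all $t$, i.e.\ the vector $XM_B$; since $\rank(M_B)=n\,r(B)=kn$ and $M_B$ is $kn\times kn$, it is invertible, so all of $X$ and hence each $x_i$ is recovered. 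An $R_2$ client $(y,C\setminus\{y\})$ can compute $XM_{y'}$ for every $y'\in C\setminus\{y\}$, and recovering $y$ from $h_y=y+XM_y$ amounts to expressing $XM_y$ through those; this is possible because $r(C\setminus\{y\})=r(C)$ forces $\rank(M_{C\setminus\{y\}})=\rank(M_C)=n\,r(C)$, whence $\operatorname{colspace}(M_y)\subseteq\operatorname{colspace}(M_C)=\operatorname{colspace}(M_{C\setminus\{y\}})$.

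For the converse I would begin with a perfect linear $(n,q)$ code $g=(g_1,\dots,g_m)$ of length exactly $mn$, written $g_i=\sum_l x_l A_{il}+\sum_j y_j B_{ij}$. The $m$ clients of $R_3$ all share the side information $X$ and must jointly recover $y_1,\dots,y_m$; since the code produces only $mn$ output symbols, this forces the $mn\times mn$ block matrix $M=[B_{ij}]$ to be invertible. Replacing $g$ by $h=gM^{-1}$, which is legitimate by the same change-of-basis argument used in the proof of Theorem~\ref{th:equivalence} (with $\psi'_\rho(h,\cdot)=\psi_\rho(hM,\cdot)$), yields $h_i=y_i+\sum_l x_l C_{il}$. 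I then define $M_i$ to be the $kn\times n$ matrix whose $l$-th $n\times n$ row-block is $C_{il}$. Running the three decodability arguments of the previous paragraph in reverse shows that $R_1$ forces $\rank(M_B)=kn=n\,r(B)$ for every basis $B$, and $R_2$ forces $\operatorname{colspace}(M_y)\subseteq\operatorname{colspace}(M_{C\setminus\{y\}})$ for every circuit $C$ and every $y\in C$.

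The main obstacle is the last step: promoting these basis- and circuit-level statements to the full condition $\rank(M_I)=n\cdot r(I)$ for \emph{every} $I\subseteq Y$, which is the genuinely matroid-theoretic part of the argument. For the lower bound I would pick a maximal independent $J\subseteq I$ with $|J|=r(I)$ and extend it to a basis $B\supseteq J$; because $M_B$ is $kn\times kn$ of rank $kn$ its columns are independent, so the sub-block $M_J$ has rank $n|J|=n\,r(I)$, giving $\rank(M_I)\ge\rank(M_J)=n\,r(I)$. For the matching upper bound I would prove $\operatorname{colspace}(M_I)=\operatorname{colspace}(M_J)$: for each $y\in I\setminus J$ the set $J\cup\{y\}$ is dependent and hence contains a circuit $C$; since $J$ is independent, $C$ must contain $y$ and satisfy $C\setminus\{y\}\subseteq J$, so the circuit containment yields $\operatorname{colspace}(M_y)\subseteq\operatorname{colspace}(M_{C\setminus\{y\}})\subseteq\operatorname{colspace}(M_J)$. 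Thus $\operatorname{colspace}(M_I)=\operatorname{colspace}(M_J)$ and $\rank(M_I)=\rank(M_J)=n\,r(I)$, closing the equivalence. Degenerate elements need no separate treatment: a loop $y$ is itself a circuit $\{y\}$, so the $R_2$ client $(y,\emptyset)$ forces $M_y=0$, consistent with $n\cdot r(\{y\})=0$.
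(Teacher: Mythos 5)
Your proof is correct and follows essentially the same route as the paper: the same encoding $f_i = y_i + \xi M_i$ in the forward direction, and in the converse the same normalization via the $R_3$ clients (invertibility of the $y$-coefficient block matrix, then multiplication by its inverse), followed by extracting invertibility of $M_B$ from the $R_1$ clients and the factorization $M_y = M_{C\setminus\{y\}}T$ from the $R_2$ clients. In fact your last paragraph proves slightly more than the paper writes down: the paper simply asserts that it ``suffices'' to verify $\rank(M_I)=n\cdot r(I)$ on bases and circuits, whereas your maximal-independent-subset argument (extend $J\subseteq I$ to a basis for the lower bound, cover each $y\in I\setminus J$ by a circuit inside $J\cup\{y\}$ for the column-space upper bound) supplies the justification for that assertion, including the loop case.
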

\begin{proof}
First, we assume that in $\mathcal{I}_{\mathcal{M}}(Z,R)$ all messages are split into $n$ packets, and we write $y_i=(y_{i1},\dots,y_{in})$, $x_i=(x_{i1},\dots,x_{in})\in \mathbb{F}_q^n$, $\xi=(x_{11},\dots,x_{1n},\dots,x_{k1},\dots,x_{kn})\in \mathbb{F}_q^{kn}$, and $\chi=(y_{11},\dots,y_{1n},\dots,y_{m1},\dots,y_{mn},$ $x_{11},\dots,x_{1n},\dots,x_{k1},\dots,x_{kn})\in \mathbb{F}_q^{(m+k)n}.$

Let $M_1,\dots,M_m \in \mathbb{M}_{\mathbb{F}_q}(kn,n)$ be an $n$-linear representation of the matroid $\mathcal{M}$. Consider the following linear map $f(\chi)=(f_1(\chi),\dots, f_m(\chi))$
$$f_i(\chi)=y_i+\xi M_i \in \mathbb{F}_q^n, i=1,\dots,m.$$

We claim that $f$ is a perfect $(n,q)$ linear index code for $\mathcal{I}_{\mathcal{M}}$. To this end, we show the existence of the decoding functions of condition (I1) for all the clients in $R$:
\begin{enumerate}
  \item Fix a  basis $B=\{y_{i_1},\dots,y_{i_k}\}\in\mathfrak{B}(\mathcal{M})$, with $i_1<i_2<\dots<i_k$, and let $\rho_i=(x_i,B)\in R_1$,  $i=1,\dots,k$. By Eq.~\eqref{eq:rep} $\rank (M_B)=kn$, hence the  $kn\times kn$ matrix $M_B$ is invertible. Thus, the corresponding decoding functions can be written as
       $$\psi_{\rho_i}=[f_{i_1}-y_{i_1}|\dots|f_{i_k}-y_{i_k}]U_i,$$
       where the $U_i$'s are the $kn\times n$ the block matrices that form $M_B^{-1}$ in the following way:
       $$[U_i|\dots|U_k]=M_B^{-1}.$$
     % $$[\psi_{\rho_1}|\dots|\psi_{\rho_k}]=[f_{i_1}-y_{i_1}|\dots|f_{i_k}-y_{i_k}]M_B^{-1}.$$

  \item Let $C=\{y_{i_1},\dots,y_{i_c}\}\in \mathfrak{C}(\mathcal{M})$, with $i_1<i_2<\dots<i_c$, and $\rho=(y_{i_1},C')\in R_2$, with $C'=C-y_{i_1}$. We have $\rank(M_{C'})=\rank(M_C)$ by the definition of matroid cycles. Therefore, there is a matrix $T\in\mathbb{M}_{\mathbb{F}_q}(cn-n,n)$, such that, $M_{i_1}=M_{C'}T$. Now, note that       $$[f_{i_2}-y_{i_2}|\dots|f_{i_c}-y_{i_c}]=\xi M_{C'}.$$
      Therefore, the corresponding decoding function is
      $$\psi_\rho=f_{i_1}-[f_{i_2}-y_{i_2}|\dots|f_{i_c}-y_{i_c}]T.$$
  \item For all $\rho=(y_i,X)\in R_3, \psi_\rho(f,\xi)=f_i-\xi M_i.$
\end{enumerate}
Since this index code  satisfies the lower bound $\mu(\mathcal{I}_{\mathcal{M}})=m$, it is a perfect index code.

Now, suppose that $f(\chi)=(f_1(\chi),\dots,f_m(\chi))$, $f_i(\chi)\in \mathbb{F}_q^n$, is a perfect $(n,q)$ linear index code for $\mathcal{I}_{\mathcal{M}}$. We will show that this will induce an  $n$-linear representation of the matroid $\mathcal{M}$  over $\mathbb{F}_q$.

Due to the clients in $R_3$, we can use the same reasoning used in the proof of the converse of Theorem \ref{th:equivalence} and assume that the functions $f_i(\chi), i=1,\dots,m$, have the following form
\begin{equation}\label{eq:IC}
f_i(\chi)= y_i+ \xi A_i,
\end{equation}
where the $A_i$'s are $kn\times n$ matrices over $\mathbb{F}_q$. We claim that these matrices form an $n$-linear representation of $\mathcal{M}$ over $\mathbb{F}_q$. To prove this, it suffices to show that the matrices $A_i$'s satisfy   Eq.~\eqref{eq:rep} for all the bases and cycles of $\mathcal{M}.$

Let $B\in \mathfrak{B}(\mathcal{M})$ a basis. Then, by Eq.~\eqref{eq:IC}, the clients $(x_j,B), j=1,\dots,k$, will be able to decode their required messages iff $A_B$ is invertible. Therefore, $\rank(A_B)=nk=n r(B).$

Let  $C\in \mathfrak{C}(\mathcal{M})$ a circuit. Pick $y_{i_1}\in C$ let $C'=C-y_{i_1}$. We have  $r(C')=|C|-1=|C'|$,i.e., $C'$ is an independent set of the matroid, and  there  is a basis $B$ of $\mathcal{M}$ such that $C'\subseteq B$ (by the independence augmentation axiom  \cite[chap. 1]{OX93}). Thus, from the previous discussion, $A_{C'}$ has full rank, i.e. $\rank(A_{C'})=(|C|-1)n$.
Now consider the client $\rho=(y_{i_1},C')\in R_2$, the existence of the corresponding linear decoding function $\psi_\rho$ implies that there exists a matrix $T\in  \mathbb{M}_{\mathbb{F}}(|C|n-n,n)$ such that $A_{i_1}=A_{C'}T.$ So, $\rank(A_C)=\rank(A_{C'})=n(|C|-1)=nr(C).$

\end{proof}

\section{Properties of Index Codes}\label{sec:applications}
\subsection{Block Encoding}

Index coding, as previously noted, is related to the problem of  zero-error source coding with side information, discussed by Witsenhausen  in
\cite{W76}. Two cases were studied there, depending on whether the transmitter knows the side information available to the receiver or not. It was  shown  that in the former case the repeated scalar encoding is optimal, i.e., block encoding does not have any advantage over the scalar encoding. We will demonstrate in this section that this result does not always hold for the index coding problem, which can be seen as an extension of the point to point problem discussed in \cite{W76}.

\begin{figure}[t]
\begin{center}
	\includegraphics[width=3in]{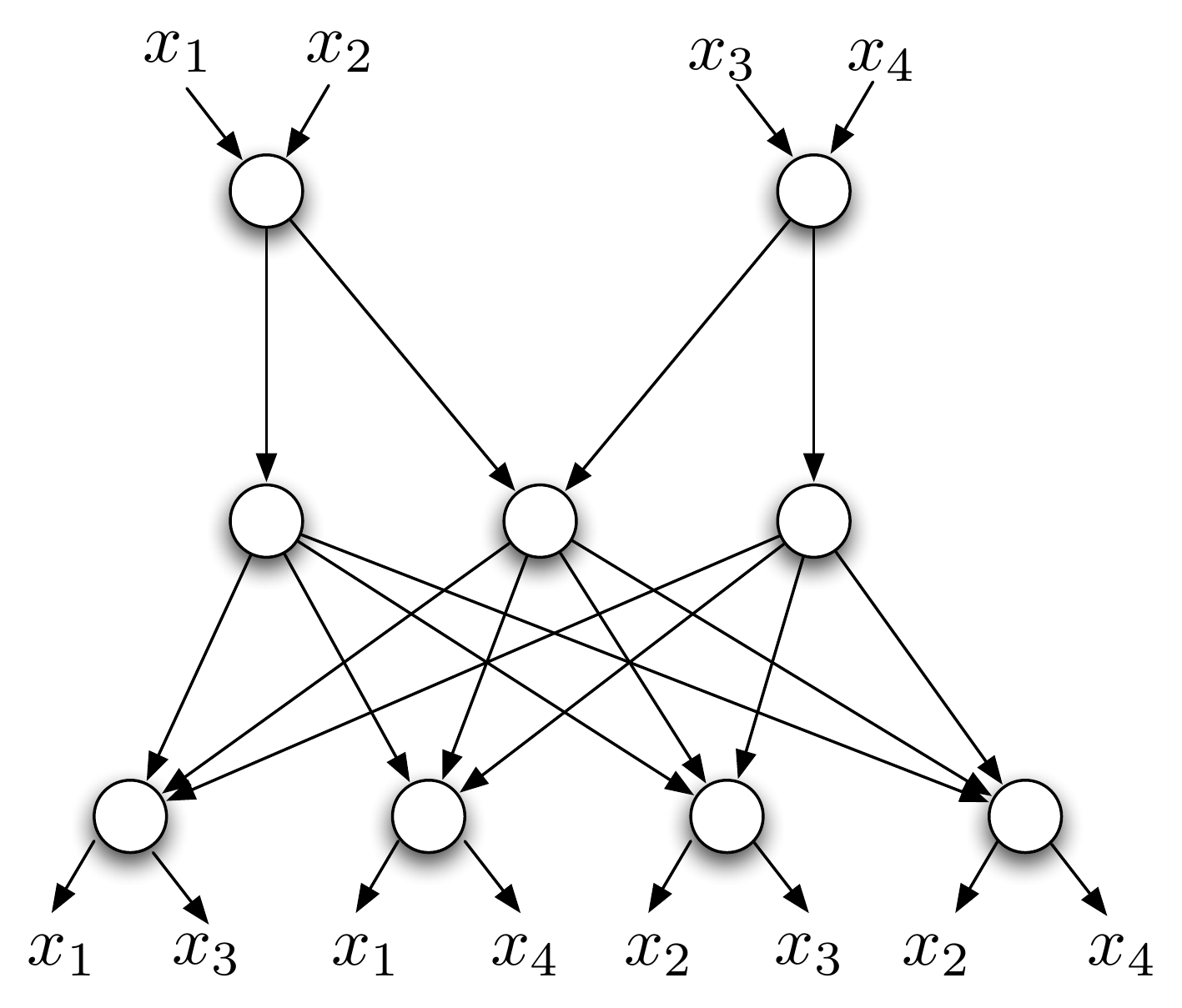}
 \caption{The M-Netwrok $\mathcal{N}_1$ \cite{MEHK03}. } \label{fig:MPappus}
\end{center}
\end{figure}

Let $\mathcal{N}_1$ be the M-network introduced in \cite{MEHK03} and depicted
in Figure \ref{fig:MPappus}. It was shown in \cite{DFZ07} that this network
does not have a scalar linear network code, but has a vector linear
one of block length $2$. Interestingly, such a vector linear solution does
not require encoding and can be solved using a forwarding scheme. A more general result was proven in \cite{DFZ07}:

\begin{theorem}\label{th:Mnet}
The M-network has a linear network code of block length $n$ if and only if $n$ is even.
\end{theorem}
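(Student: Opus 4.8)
The plan is to prove the two implications separately; the sufficiency of even $n$ is routine, while the necessity of even $n$ carries all the content.

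For the ``if'' direction I would use the fact recorded just above the statement, namely that $\mathcal{N}_1$ already admits a vector linear code of block length $2$ (in fact a pure forwarding scheme). Given any even $n=2m$, I obtain a code by parallel composition: regard each source vector in $\mathbb{F}_q^n$ as $m$ consecutive blocks of length $2$, run the block-length-$2$ solution independently on each of the $m$ coordinate blocks along every edge, and concatenate the outputs. Since all global and local encoding maps of the block-$2$ code are $\mathbb{F}_q$-linear and act block-wise, their $m$-fold concatenation is again linear and satisfies conditions (N1)--(N3); hence a linear $(n,q)$ code exists whenever $n$ is even.

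For the converse I would first reduce everything to the two bottleneck edges. Assuming a linear $(n,q)$ code exists, each unit-capacity edge carries a vector in $\mathbb{F}_q^n$, so the two bottleneck edges carry $c_1=aP_1+bQ_1$ and $c_2=aP_2+bQ_2$ for matrices $P_i,Q_i\in M_{\mathbb{F}_q}(n,n)$, where $a,b\in\mathbb{F}_q^n$ are the two sources. I would then read off, from the four terminals together with the side information supplied to them by their direct source edges, a decodability condition for each demand: ``recover one source from a bottleneck vector plus partial side information'' translates into a rank/invertibility condition on an appropriate combination of the $P_i,Q_i$ (e.g.\ a block $[P_i\,|\,Q_j]$ having full row rank, or the image of a suitable difference being all of $\mathbb{F}_q^n$).

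The crux is then the parity step. Combining these rank conditions, the goal is to exhibit a single matrix $P\in M_{\mathbb{F}_q}(n,n)$ that is forced to be idempotent, $P^2=P$, in such a way that the crossing demands force both the image of $P$ and the image of $I-P$ to be fully used; the symmetry between the two sources should then force $\rank(P)=\rank(I-P)$. Since $P$ and $I-P$ are complementary idempotents we have $\rank(P)+\rank(I-P)=n$, whence $2\,\rank(P)=n$ and $n$ is even. I expect this last step to be the main obstacle: one must extract from the specific topology of $\mathcal{N}_1$ the exact linear relations pinning down such a $P$, and prove the equal-rank (dimension-halving) identity over an arbitrary finite field $\mathbb{F}_q$ rather than only over $\mathbb{F}_2$, where naive counting would suffice. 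I would anticipate needing at least three of the four terminals simultaneously: two to force the complementary decomposition of $\mathbb{F}_q^n$ across $c_1$ and $c_2$, and a third to force the two resulting subspaces to have equal dimension.
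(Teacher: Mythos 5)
Your ``if'' direction is fine: interleaving $m$ independent copies of the block-length-$2$ forwarding solution is a standard parallel composition and gives a linear $(2m,q)$ code satisfying (N1)--(N3). Note, however, that the paper itself contains no proof of this theorem at all --- it is imported verbatim from the Dougherty--Freiling--Zeger work cited as \cite{DFZ07} --- so the relevant benchmark is the original argument there, not anything in this paper.

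For the ``only if'' direction there are two genuine problems. First, your structural reading of $\mathcal{N}_1$ is wrong: the M-network has \emph{two} source nodes each generating \emph{two} messages (say $a,b$ at one and $c,d$ at the other), \emph{four} unit-capacity middle edges, and four receivers each demanding one message from each source; the receivers have no direct ``side information'' edges from the sources. The picture you describe --- two sources $a,b$, two bottleneck vectors $c_1=aP_1+bQ_1$, $c_2=aP_2+bQ_2$, and terminals aided by direct source edges --- is essentially the index-coding companion network (the broadcast-bottleneck construction of Figure 2), not the M-network, so the rank conditions you propose to ``read off'' are conditions on the wrong instance. Second, and more fundamentally, the parity step is not a proof but a restatement of the goal: complementary idempotents $P$, $I-P$ exist with $\rank(P)$ equal to \emph{any} value in $\{0,\dots,n\}$, and $\rank(P)+\rank(I-P)=n$ yields no parity information unless you actually derive $\rank(P)=\rank(I-P)$ --- which is exactly the content of the theorem, and which your sketch explicitly defers (``I expect this last step to be the main obstacle''). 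The known argument proceeds differently: in any linear solution the two edges leaving a given source must split each of that source's two $n$-dimensional messages into complementary subspaces (one edge's contribution complementing the other's), and the four crossing demands at the receivers force every one of these complementary pairs to have equal dimension, hence dimension exactly $n/2$, so $n$ must be even. Until you exhibit the mechanism that forces the equal split on the correct topology, the necessity direction remains unproved.
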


Consider an instance $I_{\mathcal{N}_1}$ of the  index coding problem  corresponding  to the M-network  obtained by the construction of Definition \ref{def:1}. By theorem \ref{th:Mnet}, $I_{\mathcal{N}_1}$  does not admit a perfect scalar linear index code. It  has, however, a perfect linear index code of block length 2,
 over any field. Thus, $\mathcal{I}_{\mathcal{N}_1}$ is an instance of the index coding problem where vector linear coding outperforms scalar linear one. This result can be summarized
by the following corollary:
\begin{corollary}
For $\mathcal{I}_{\mathcal{N}_1}, \lambda^*(2,2)<\lambda^*(1,2).$
\end{corollary}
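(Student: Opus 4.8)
The plan is to combine the network-to-index reduction of Theorem~\ref{th:equivalence} with the parity characterization of feasible block lengths for the M-network in Theorem~\ref{th:Mnet}. First I would recall that $\mathcal{I}_{\mathcal{N}_1}$ is the instance obtained from the M-network $\mathcal{N}_1$ via the construction of Definition~\ref{def:1}, so that $\mu(\mathcal{I}_{\mathcal{N}_1})=m$, where $m=|E|$ is the number of edges of $\mathcal{N}_1$. The lower-bound argument given just after the definition of $\mu$ shows $\lambda(n,q)\geq\mu(\mathcal{I})$ for every $n,q$ and every (even non-linear) code; since a vector linear solution is a special case, it follows that $\lambda^*(n,2)\geq\mu(\mathcal{I}_{\mathcal{N}_1})=m$ for every block length $n$. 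With this baseline in place, the corollary reduces to the two claims $\lambda^*(2,2)=m$ and $\lambda^*(1,2)>m$.

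For the even block length $n=2$, Theorem~\ref{th:Mnet} guarantees a linear $(2,2)$ network code for $\mathcal{N}_1$, and hence by Theorem~\ref{th:equivalence} a perfect linear $(2,2)$ index code for $\mathcal{I}_{\mathcal{N}_1}$. Being perfect, this code attains rate $\mu(\mathcal{I}_{\mathcal{N}_1})=m$, so combined with the lower bound $\lambda^*(2,2)\geq m$ it pins down $\lambda^*(2,2)=m$ exactly. For the odd block length $n=1$, Theorem~\ref{th:Mnet} asserts that $\mathcal{N}_1$ has \emph{no} scalar linear network code; by the contrapositive of Theorem~\ref{th:equivalence}, $\mathcal{I}_{\mathcal{N}_1}$ therefore admits no perfect scalar linear index code. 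Now a scalar ($n=1$) linear solution has integer rate equal to its length, and that rate is at least $m$; a rate of exactly $m$ would make the code perfect, which we have just excluded. Hence $\lambda^*(1,2)\geq m+1>m$. Putting the two claims together gives $\lambda^*(2,2)=m<\lambda^*(1,2)$, which is the assertion of the corollary.

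The only point requiring genuine care — and the main, if modest, obstacle — is the \emph{exact} determination of $\lambda^*(2,2)$ rather than a mere inequality: one must invoke both directions of the argument, namely the existence of the perfect (hence rate-$m$) linear code supplied by Theorem~\ref{th:equivalence} and the matching lower bound $\lambda^*\geq\mu$, to conclude equality. The scalar case is then clean because integrality of the scalar rate turns the exclusion of equality with $m$ into the strict inequality $\lambda^*(1,2)>m$ automatically, so no further estimate of the scalar optimum is needed.
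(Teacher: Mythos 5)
Your proof is correct and follows essentially the same route as the paper's: the paper's one-line proof (``follows directly from Theorems~\ref{th:equivalence} and \ref{th:Mnet}'') is exactly your combination of the network-to-index reduction with the even/odd block-length dichotomy of the M-network, with your additional details (the lower bound $\lambda^*\geq\mu=m$, the fact that a rate-$m$ code is by definition perfect, and integrality of the scalar rate) being a faithful filling-in of what the paper leaves implicit in the paragraph preceding the corollary.
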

\begin{proof} Follows directly from theorems \ref{th:equivalence} and \ref{th:Mnet}.
\end{proof}

Another similar instance of the index coding problem is $\mathcal{I}_{\mathcal{M}_{np}}$  obtained by applying the construction of Definition \ref{def:matindex} to  the non-Pappus matroid $\mathcal{M}_{np}.$ Since the non-Pappus matroid $\mathcal{M}_{np}$ does not admit any linear representation, by Theorem \ref{th:equivalence2}, there is, also, no perfect scalar linear index code for $\mathcal{I}_{\mathcal{M}_{np}}$. Nevertheless, the multilinear representation of the non-Pappus matroid over $GF(3)$ described in Example~\ref{ex:non-Pappus} induces a perfect $(3,2)$ vector linear index code for $\mathcal{I}_{\mathcal{M}_{np}}$.

\begin{corollary}
For instance $\mathcal{I}_{\mathcal{M}_{np}}$ of the Index Coding problem it holds that $ \lambda^*(2,2)<\lambda^*(1,2).$
\end{corollary}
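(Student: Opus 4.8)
The plan is to sandwich the two quantities around $\mu(\mathcal{I}_{\mathcal{M}_{np}})=m=9$: I will show that no perfect scalar linear code exists, so that $\lambda^*(1,2)>9$, while a perfect linear code of block length $2$ over $\mathbb{F}_2$ does exist, so that $\lambda^*(2,2)=9$, and the strict inequality follows. Recall that every linear index code obeys $\lambda^*\ge\lambda\ge\mu$, with $\lambda^*(n,q)=\mu$ exactly when a \emph{perfect} linear $(n,q)$ index code exists, and that by Theorem~\ref{th:equivalence2} the latter is equivalent to $\mathcal{M}_{np}$ admitting an $n$-linear representation over $\mathbb{F}_q$.

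The scalar side is immediate and poses no difficulty. By Pappus's theorem the non-Pappus matroid $\mathcal{M}_{np}$ has no linear ($1$-linear) representation over any field, in particular not over $\mathbb{F}_2$. Applying the $n=1$, $q=2$ case of Theorem~\ref{th:equivalence2}, $\mathcal{I}_{\mathcal{M}_{np}}$ has no perfect scalar linear $(1,2)$ index code. Since a valid non-perfect scalar linear code trivially exists (for instance, transmit every message uncoded), $\lambda^*(1,2)$ is finite and strictly larger than $\mu=9$.

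The crux is the block-length-$2$ side: I must produce a $2$-linear representation of $\mathcal{M}_{np}$ over $\mathbb{F}_2$, i.e. nine matrices $N_1,\dots,N_9\in\mathbb{M}_{\mathbb{F}_2}(6,2)$, equivalently nine $2$-dimensional subspaces (lines of $\mathrm{PG}(5,2)$) $V_1,\dots,V_9\subseteq\mathbb{F}_2^6$, with $\rank(N_I)=2\,r(I)$ for all $I\subseteq Y$. Explicitly this requires $\rank[N_a|N_b|N_c]=4$ for each of the eight circuits in $Y_0$, full rank $6$ for every non-circuit triple---in particular for $\{7,8,9\}$, the triple whose collinearity Pappus would force and whose failure defines $\mathcal{M}_{np}$ (indeed $7,8,9$ are the only ground elements lying on just two of the listed lines)---and full independence on all pairs and on the whole set. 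The representation of Example~\ref{ex:non-Pappus} cannot simply be reused: it lives over $GF(3)$, hence in characteristic $3$, and there is no characteristic-preserving transport of it to $\mathbb{F}_2$; nor does it arise as the $GF(4)$-to-$\mathbb{F}_2$ expansion of a scalar code, since $\mathcal{M}_{np}$ has no scalar representation over $GF(4)$ either.

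To build the binary representation I would coordinatize over the non-commutative ring $\mathbb{M}_{\mathbb{F}_2}(2,2)$, partitioning each $N_i$ into three stacked $2\times 2$ blocks $A_i,B_i,C_i$; a triple is then a circuit precisely when the associated $6\times 6$ matrix drops to rank $4$, and the non-commutativity of $2\times 2$ matrices over $\mathbb{F}_2$ supplies exactly the slack that lets the eight prescribed collinearities hold while $\{7,8,9\}$ remains independent---the very obstruction that forbids a commutative (scalar) representation. I expect this to be the main obstacle, and I would settle it by finite computation: the Grassmannian $\mathrm{Gr}(2,6)$ over $\mathbb{F}_2$ has only $651$ points, so an exhaustive or incidence-guided search for nine subspaces realizing the non-Pappus pattern is entirely feasible, after which the eight rank-$4$ identities and the remaining full-rank conditions are checked directly. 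Once such $N_1,\dots,N_9$ are exhibited, the $n=2$, $q=2$ direction of Theorem~\ref{th:equivalence2} furnishes a perfect $(2,2)$ index code for $\mathcal{I}_{\mathcal{M}_{np}}$, whence $\lambda^*(2,2)=9<\lambda^*(1,2)$, completing the proof.
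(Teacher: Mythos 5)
Your scalar half is fine and matches the paper: Pappus non-representability plus Theorem~\ref{th:equivalence2} rules out a perfect $(1,2)$ code, and the sandwich around $\mu(\mathcal{I}_{\mathcal{M}_{np}})=m=9$ is the right frame. The crux step, however, fails twice over. First, ``I would settle it by finite computation'' is not a proof: no matrices $N_1,\dots,N_9$ are exhibited, so nothing is established. Second, and fatally, the object you propose to search for does not exist: the non-Pappus matroid has \emph{no} $2$-multilinear representation over $\mathbb{F}_2$, so your search over $\mathrm{Gr}(2,6)$ would return empty. To see this, normalize any candidate family of $2$-subspaces of $\mathbb{F}_2^6$: since $\{1,2,4\}$ is a basis, take $V_1,V_2,V_4$ to be the coordinate planes; the circuit $\{1,2,3\}$ makes $V_3$ the graph of an isomorphism $V_1\to V_2$, which a basis change turns into the diagonal $\{(v,v,0)\}$; the independences $\{1,2,5\},\{1,4,5\},\{2,4,5\}$ then force $V_5=\{(v,va,v)\}$ and the circuit $\{4,5,6\}$ together with $\{1,2,6\}$ forces $V_6=\{(v,va,v\nu)\}$ for some $a,\nu\in GL_2(\mathbb{F}_2)$. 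Independence of $\{3,4,5\}$ requires $a+I$ invertible, and $V_5\cap V_6=0$ requires $\nu+I$ invertible. But in $GL_2(\mathbb{F}_2)\cong S_3$ the only elements with no eigenvalue $1$ are the two elements of order $3$, which are mutual inverses and hence commute, so $a\nu+\nu a=0$. Now the circuits force $V_7=\{(0,va,v)\}$, $V_8=\{(va,va,v\nu)\}$, $V_9=\{(v,v(I+\nu+\nu a),v\nu)\}$, and a direct computation shows $\rank$ of the triple $\{7,8,9\}$ equals $6$ iff the commutator $a\nu+\nu a$ is invertible; with it zero the triple has rank $4$, i.e., every such family of subspaces represents the \emph{Pappus} matroid, never the non-Pappus one. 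Your intuition that non-commutativity of $M_2(\mathbb{F}_2)$ supplies the needed slack is exactly what fails: the invertibility constraints confine $a,\nu$ to an abelian subgroup.

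The paper's proof takes a different and much shorter route that you explicitly rejected: it uses the known $2$-linear representation of $\mathcal{M}_{np}$ over $GF(3)$ from Example~\ref{ex:non-Pappus}, which by Theorem~\ref{th:equivalence2} yields a perfect vector linear index code of block length $2$ over $GF(3)$. You correctly spotted the characteristic clash between that representation and the printed field size $q=2$, but drew the wrong conclusion from it. The corollary's ``$\lambda^*(2,2)$'' (and the text's ``$(3,2)$ index code'') are evidently typos for the $GF(3)$ statement $\lambda^*(2,3)<\lambda^*(1,3)$, which is what the cited one-line proof actually delivers; the correct repair is to fix $q$, not to hunt for a binary representation. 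Note also that even as a target the literal $q=2$ claim is beyond your method: since no perfect $(2,2)$ code exists, any comparison of $\lambda^*(2,2)$ with $\lambda^*(1,2)$ would have to go through non-perfect codes, which Theorem~\ref{th:equivalence2} (and your whole sandwich argument) says nothing about.
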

\begin{proof} Follows directly from Theorem \ref{th:equivalence2}.
\end{proof}

\subsection{Linearity vs. Non-Linearity}
Linearity is a desired property for any code, including index codes.  It was
conjectured in \cite{BBJK06} that binary scalar linear index codes are optimal,
meaning that $\lambda^*(1,2)=\lambda(1,2)$ for all index coding instances. Lubetzky and Stav  disproved this conjecture in \cite{LS07} for the \emph{scalar
linear} case by providing, for any given number of messages  $k$ and field
$\mathbb{F}_q$, a family of  instances of the index coding problem with an increasing gap between $\lambda^*(1,q)$ and $\lambda(1,q)$.

In this section, we show that \emph{vector linear} codes are still suboptimal. In particular, we give an instance where non-linear index codes outperform vector linear codes for any choice of field and block length $n$. Our proof is based on the insufficiency of linear network codes result proved by Dougherty et al. \cite{DFZ05}. Specifically,  \cite{DFZ05} showed that the network $\mathcal{N}_3$, depicted in Figure~\ref{fig:zegernet} has the following property:

\begin{theorem}\label{theor:aux}\cite{DFZ05}
The network $\mathcal{N}_3$ does not have a linear network code, but has a $(2,4)$
non-linear one.
\end{theorem}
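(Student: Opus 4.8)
The plan is to establish the two halves of the statement separately, using the correspondence between matroid representability and linear network solvability. The network $\mathcal{N}_3$ is a \emph{matroidal network} assembled from two gadgets, one associated with the Fano matroid $F_7$ and one with the non-Fano matroid $F_7^-$, glued so that any single code must simultaneously satisfy the decoding demands arising from both gadgets. The two classical facts I would invoke are that $F_7$ is representable over a field $\mathbb{F}$ if and only if $\mathrm{char}(\mathbb{F})=2$, while $F_7^-$ is representable over $\mathbb{F}$ if and only if $\mathrm{char}(\mathbb{F})\neq 2$. The whole argument hinges on converting ``$\mathcal{N}_3$ has a linear code'' into ``both matroids are representable over one common field,'' which is contradictory.

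For the impossibility of a linear code I would argue exactly as in the converse direction of Theorem~\ref{th:equivalence2}: a vector linear $(n,q)$ network code assigns to each edge a global encoding matrix over $\mathbb{F}_q$, and the decoding requirements of each gadget translate into rank conditions on concatenations of these matrices, which are precisely the defining equations $\rank(M_I)=n\cdot r(I)$ of a representation. The Fano gadget then forces the matrices realizing its dependencies to satisfy a relation that collapses to $2\equiv 0$ in $\mathbb{F}_q$, so $\mathrm{char}(\mathbb{F}_q)=2$, while the non-Fano gadget forces the corresponding ``doubling'' map to be invertible, so $\mathrm{char}(\mathbb{F}_q)\neq 2$. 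Since both conditions live at the level of the single underlying field and cannot hold at once, no vector linear code exists over any $\mathbb{F}_q$ and any block length $n$.

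For the existence of a $(2,4)$ nonlinear code I would build the solution over the four-letter alphabet with block length $2$, exploiting the fact that a nonlinear map is not tied to a single characteristic. The idea is to satisfy the $F_7$-derived demands using arithmetic that behaves like characteristic $2$ and, on an independent portion of the block, satisfy the $F_7^-$-derived demands using a nonlinear map emulating odd-characteristic behavior, then package the two into one encoding over an alphabet of size $4$. Concretely I would start from the characteristic-$2$ representation of $F_7$ and a characteristic-$3$ representation of $F_7^-$ over $GF(3)$, carry each through its gadget to obtain valid local decoders, and verify that stitching them into a single nonlinear encoding still meets every demand. Because this encoding need not be additive across the two coordinates, no global characteristic constraint is produced.

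The main obstacle, I expect, is the vector step in the impossibility half. For scalar linear codes the Fano/non-Fano characteristic dichotomy is classical, but to rule out \emph{vector} linear codes of arbitrary block length I must show that the obstruction survives when each message is a vector and each coding coefficient is a matrix, i.e., that the relevant Fano dependency still forces the genuine field relation $2\equiv 0$ in $\mathbb{F}_q$ rather than a weaker block-matrix identity, and dually for the non-Fano gadget. Identifying precisely which dependency yields a statement about the field characteristic (uniform across all $n$ coordinates) and checking that it cannot be evaded by a clever matrix choice is the delicate part; the nonlinear construction, by contrast, is mostly an explicit verification.
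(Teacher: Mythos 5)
Your proposal cannot be compared against a proof in this paper, because the paper contains none: Theorem~\ref{theor:aux} is imported verbatim from \cite{DFZ05} (Dougherty--Freiling--Zeger), and the authors only \emph{use} it, via Lemma~\ref{ref:nonlinear} and the reduction of Definition~\ref{def:1}, to obtain Corollary~\ref{col:16}. Measured against the original argument in \cite{DFZ05}, your outline reconstructs the right strategy: $\mathcal{N}_3$ is assembled from a Fano gadget and a non-Fano gadget sharing sources, a linear code for the first forces $\mathrm{char}(\mathbb{F}_q)=2$ while a linear code for the second forces $\mathrm{char}(\mathbb{F}_q)\neq 2$, and a nonlinear code over a four-letter alphabet satisfies both gadgets simultaneously.

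That said, two steps are genuinely deficient, and one of them would fail as written. (a) In the impossibility half, the classical facts you invoke ($F_7$ representable iff characteristic $2$, $F_7^-$ iff characteristic $\neq 2$) are statements about \emph{scalar} representations. Ruling out vector linear codes of arbitrary block length requires the matrix-level analogue: that the Fano decoding identities cannot be satisfied by $n\times n$ blocks over odd characteristic, and dually for non-Fano. You correctly flag this as ``the delicate part,'' but you supply no argument for it; in \cite{DFZ05} this step \emph{is} the proof --- the decoding equations are manipulated until one forces $M+M=0$ for an invertible matrix $M$ (hence $1+1=0$ in the field), and symmetrically an equation requiring multiplication by $2$ to be invertible. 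Leaving that open leaves the core of the theorem unproved. (b) In the existence half, the plan to take a $GF(3)$-linear representation of $F_7^-$ and ``package'' it into one portion of a block over a size-$4$ alphabet cannot work: a network code must satisfy every demand for \emph{all} assignments of source values over the full $4$-letter alphabet, and no structure of size $4$ carries odd-characteristic field arithmetic ($GF(4)$ has characteristic $2$, and in $\mathbb{Z}_4$ the element $2$ is not invertible), so a $GF(3)$-linear map cannot be transported to that alphabet, nor can the code be restricted to a $3$-element subset. The nonlinear solution in \cite{DFZ05} for the non-Fano portion is an explicit, ad hoc code over the four-element alphabet, verified demand by demand; it is not derived from any odd-characteristic representation. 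So this half requires an actual construction, not the ``stitching'' you describe, and it is substantially more than ``mostly an explicit verification.''
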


Let $\mathcal{I}_{\mathcal{N}_3}$  be the instance of the Index Coding problem that corresponds to $\mathcal{N}_3$, constructed according to
Definition~\ref{def:1}. Theorem~\ref{theor:aux} implies that $\mathcal{I}_{\mathcal{N}_3}$  does not have a perfect linear index code.
However, by Lemma~\ref{ref:nonlinear}, a $(2,4)$ non-linear code of $\mathcal{N}_3$ can be used to construct a $(2,4)$ perfect non-linear index code for
$\mathcal{I}_{\mathcal{N}_3}$ that satisfies $\lambda(2,4)=\mu(\mathcal{I})$. We summarize this result by the following corollary.

\begin{corollary}\label{col:16}
For the instance $\mathcal{I}_{\mathcal{N}_3}$ of the  Index Coding problem, it
holds that $\lambda(2,4)=\mu(\mathcal{I}_{\mathcal{N}_3})<\lambda^*(2,4)$.
\end{corollary}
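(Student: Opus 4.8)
The plan is to prove the two halves of the statement separately, namely the equality $\lambda(2,4)=\mu(\mathcal{I}_{\mathcal{N}_3})$ and the strict inequality $\mu(\mathcal{I}_{\mathcal{N}_3})<\lambda^*(2,4)$, and then combine them. Both halves reduce to invoking Theorem~\ref{theor:aux} together with the two reductions already established, so no genuinely new computation is required; the work lies in lining up the quantifiers and the chain of inequalities correctly.

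For the equality, I would start from the non-linear half of Theorem~\ref{theor:aux}: the network $\mathcal{N}_3$ admits a $(2,4)$ network code (not necessarily linear). Feeding this code into Lemma~\ref{ref:nonlinear} yields a perfect $(2,4)$ index code for $\mathcal{I}_{\mathcal{N}_3}$, i.e., a code achieving the rate $\mu=\mu(\mathcal{I}_{\mathcal{N}_3})$. Since the text already established that $\lambda(n,q)\geq\mu(\mathcal{I})$ for every $n$ and $q$, and we have exhibited a code attaining this lower bound at $(n,q)=(2,4)$, we conclude $\lambda(2,4)=\mu(\mathcal{I}_{\mathcal{N}_3})$.

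For the strict inequality, I would argue by contradiction. Because linear codes form a subclass of all codes, $\lambda^*(2,4)\geq\lambda(2,4)=\mu(\mathcal{I}_{\mathcal{N}_3})$, so it suffices to rule out equality. Suppose $\lambda^*(2,4)=\mu(\mathcal{I}_{\mathcal{N}_3})$; then some vector linear code of block length $2$ over $\mathbb{F}_4$ attains rate $\mu$, i.e., $\mathcal{I}_{\mathcal{N}_3}$ possesses a perfect linear $(2,4)$ index code. By Theorem~\ref{th:equivalence} this is equivalent to the existence of a linear $(2,4)$ network code for $\mathcal{N}_3$, which directly contradicts the linear half of Theorem~\ref{theor:aux}. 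Hence $\lambda^*(2,4)>\mu(\mathcal{I}_{\mathcal{N}_3})$, and combining with the previous paragraph gives $\lambda(2,4)=\mu(\mathcal{I}_{\mathcal{N}_3})<\lambda^*(2,4)$.

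The step I expect to require the most care is the second one, specifically ensuring that the non-existence assertion of Theorem~\ref{theor:aux} is applied at exactly the parameters fixed by the equivalence in Theorem~\ref{th:equivalence}. The reduction preserves the pair $(n,q)$, so ruling out a linear network code for $\mathcal{N}_3$ at block length $2$ over $\mathbb{F}_4$ is precisely what rules out a perfect linear $(2,4)$ index code; one must read the ``no linear network code'' clause of Theorem~\ref{theor:aux} as covering this block length and field (as it does, being the insufficiency-of-linear-coding result of \cite{DFZ05}), rather than only the scalar regime. The only remaining subtlety is the purely order-theoretic point that $\lambda^*\geq\lambda\geq\mu$, which converts the failure of equality into the desired strict inequality.
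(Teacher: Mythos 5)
Your proof is correct and follows essentially the same route as the paper: the equality $\lambda(2,4)=\mu(\mathcal{I}_{\mathcal{N}_3})$ comes from feeding the non-linear $(2,4)$ code of Theorem~\ref{theor:aux} into Lemma~\ref{ref:nonlinear} together with the bound $\lambda\geq\mu$, and the strict inequality comes from Theorem~\ref{th:equivalence} plus the non-existence of any linear network code for $\mathcal{N}_3$. Your explicit attention to the quantifiers (that the ``no linear code'' clause of Theorem~\ref{theor:aux} holds for every field and block length, hence in particular at $(2,4)$) is exactly the reading the paper relies on, just stated more carefully.
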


\begin{figure}[t]
\begin{center}
	\includegraphics[width=3in]{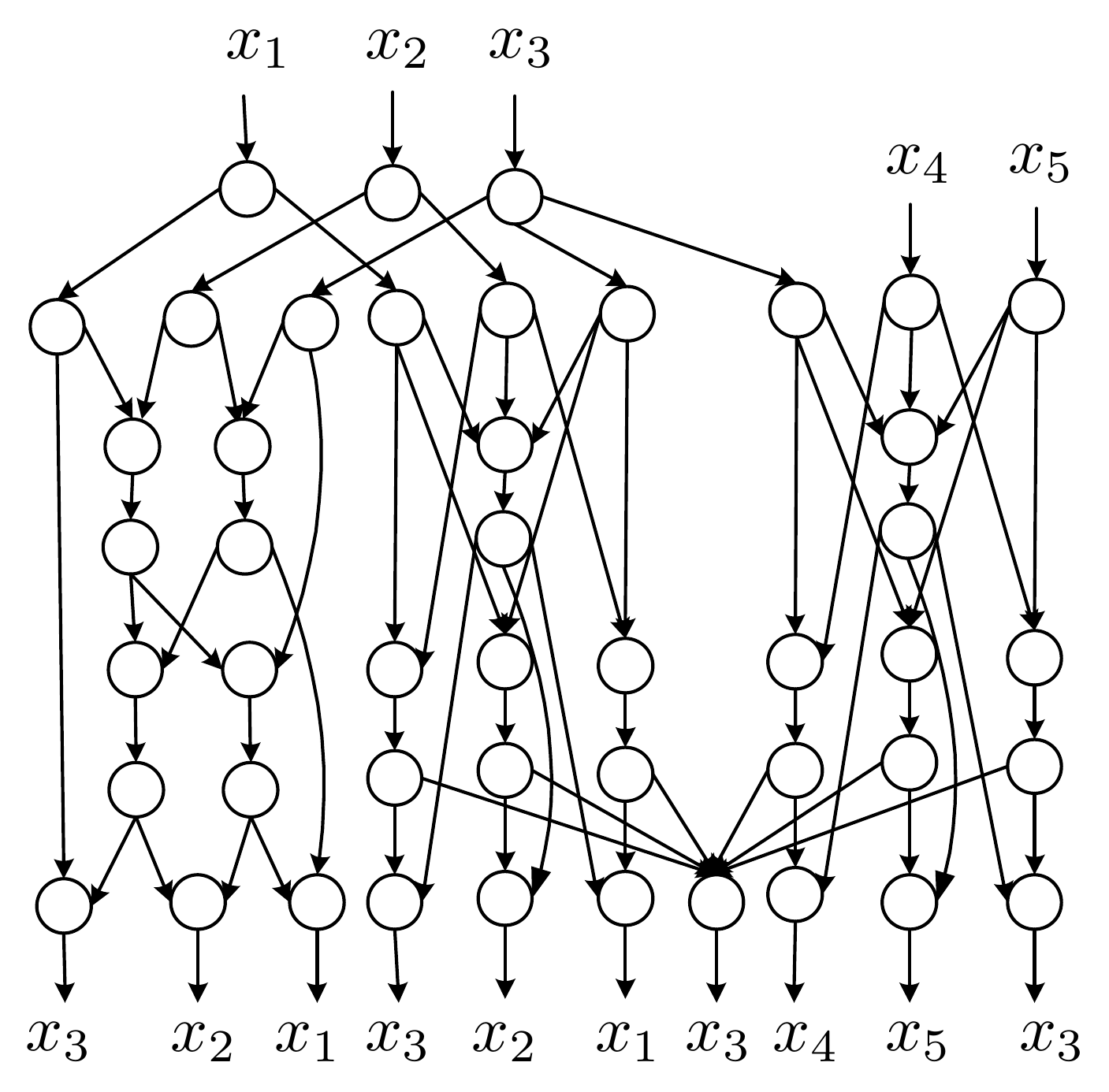}
 \caption{The network $\mathcal{N}_3$ of \cite{DFZ05}. $\mathcal{N}_3$ does not have
 a linear network code over any field, but has a non-linear one over a quaternary alphabet. } \label{fig:zegernet}
\end{center}
\end{figure}

\section{Discussion}\label{sec:discussion}
\subsection{Matroids and Networks}\label{sec:netroid}
\begin{figure*}[t]
\begin{center}
	\includegraphics[width=4in]{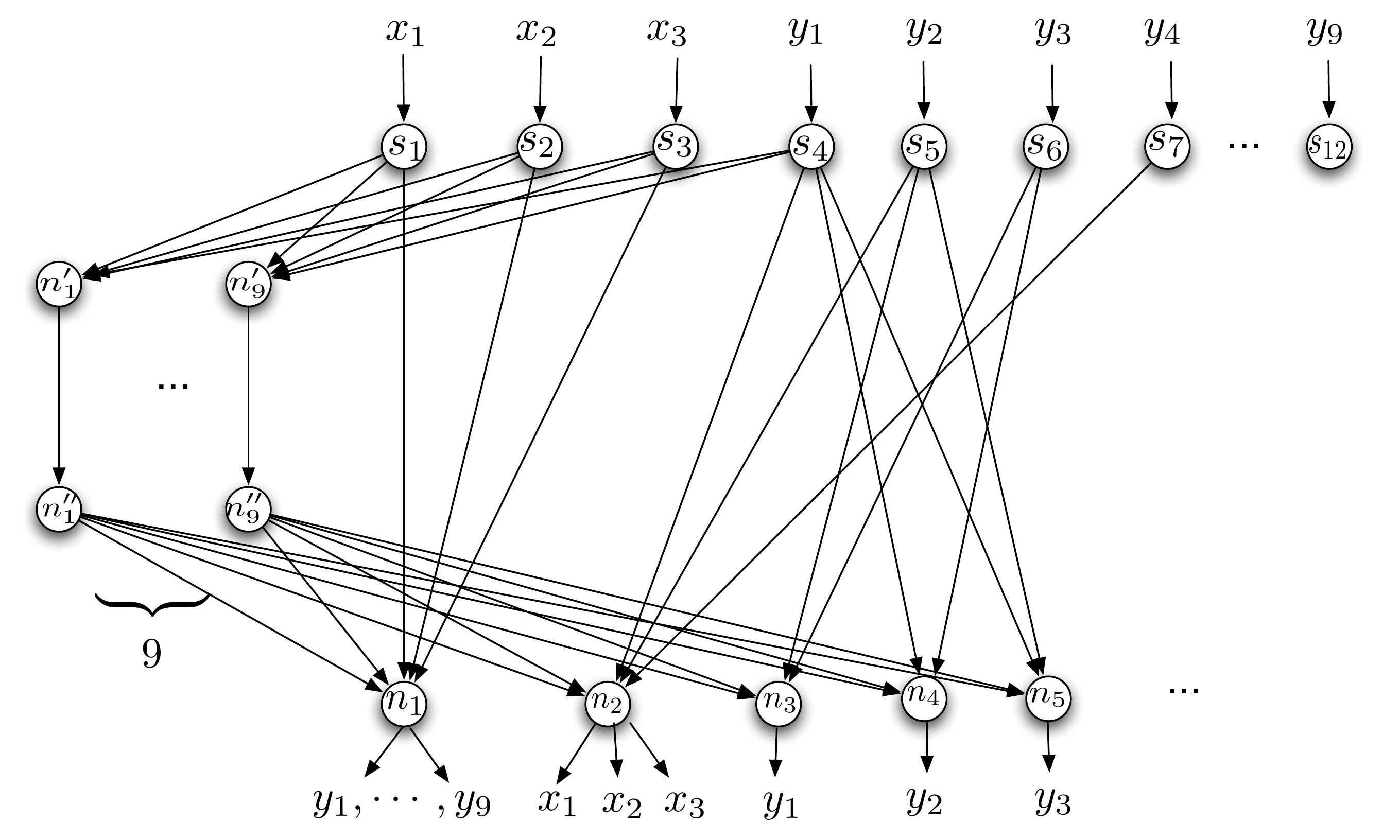}
 \caption{Part of the network resulting equivalent to the non-Pappus matroid resulting from the construction of Definition \ref{def:matnet}. }\label{fig:Pappus2}
\end{center}
\end{figure*}

Dougherty et al. \cite{DFZ05,DFZ07} used results on the representability of matroids for construction of network $\mathcal{N}_3$ depicted in Figure~\ref{fig:zegernet} that  served as a counter example to the conjecture of the sufficiency of linear network codes. They defined also the concept of a matroidal network, and presented  a method for constructing networks from matroids \cite[Section V.B]{DFZ07}. Given a certain matroid, they  design an instance to the network coding problem that forces the same independency relations of the matroid  to exist in  the set of edge messages.  However,  not all of the matroid dependency relations  are reflected in this  network.

In this paper, we present a new construction that remedies this problem. Our construction is based on the reduction presented in Section \ref{sec:Indexoid}. It provides a stronger connection between matroids and network codes. Specifically, for a given matroid, we construct a network such that any multilinear representation of it will induce a linear network code for the obtained network over the same field, and vice versa. This result will permit us to apply many important  results on matroid linear representability to the network coding theory.

\begin{definition}\label{def:matnet}
Let $\mathcal{M}(Y,r)$ be a matroid of rank $k$ defined on the set $Y=\{y_1,\dots,y_m\}$, and $\mathcal{I}_{\mathcal{M}}(Z,R)$ the corresponding index coding problem as described in Definition \ref{def:matindex}.  We associate to it the 6-partite network $\mathcal{N}(\mathcal{I_\mathcal{M}})$ constructed as follows:

\begin{enumerate}
\item $V\supset V_1\cup V_2 \cup V_3$, where $V_1=\{s_1,\dots,s_{m+k}\}$, $V_2=\{n_1',\dots,n_m'\}$, and $V_3=\{n_1'',\dots,n_m''\}$.
    \item Connect each node $s_i, i=1,\dots,k,$  to an input edge carrying an information source $x_i$ at its tail node, and each node $s_i, i=k+1,\dots,m+k,$ to an input edge carrying an information source $y_i$.
  \item Add edges $(s_i,n_j')$ and $(s_i,n_j'')$, for $i=1,\dots,m+k$ and $j=1,\dots,m$.
  \item Add edges $(n_j',n_j'')$ for $j=1,\dots,m$.
  \item  For each client $\rho=(z,H)\in R$, add a vertex $n_\rho$ to the network, and connect it to an output edge that demands source $z$. And, for each $z'\in H$, add edge $(s',n_\rho)$, where $s'\in V_1$ is connected to an input edge carrying source $z'$.
  \item  For each $\rho\in R$, add edge $(n_j'',n_\rho)$, for $j=1,\dots,m$.
\end{enumerate}
\end{definition}

%Note that to abide by the model of Section \ref{sec:model} where all edges are of unit capacity, we replace in the previous construction each edge of capacity $m$ by the gadget of Figure \ref{fig:gadget}.

\begin{proposition}\label{prop:matnet}
The matroid $\mathcal{M}$ has an $n$-linear representation over $\mathbb{F}_q$ iff the network $\mathcal{N}(\mathcal{I_\mathcal{M}})$ has an $(n,q)$ linear network code.
\end{proposition}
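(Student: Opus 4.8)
The plan is to obtain the proposition by composing the construction of Definition~\ref{def:matnet} with the equivalence already proved in Theorem~\ref{th:equivalence2}. The key observation is that $\mathcal{N}(\mathcal{I}_\mathcal{M})$ is nothing but the network realization of the index coding instance $\mathcal{I}_\mathcal{M}$ in which the broadcast channel is implemented by the $m$ ``bottleneck'' edges $(n_j',n_j'')$, $j=1,\dots,m$ --- exactly the device described in the observation opening Section~\ref{sec:connec}, here specialized to $c=m$. Since $\mu(\mathcal{I}_\mathcal{M})=m$, an index code using $m$ transmissions is precisely a \emph{perfect} index code. Thus it suffices to establish the intermediate equivalence: $\mathcal{N}(\mathcal{I}_\mathcal{M})$ has an $(n,q)$ linear network code if and only if $\mathcal{I}_\mathcal{M}$ has a perfect linear $(n,q)$ index code; chaining this with Theorem~\ref{th:equivalence2} then yields the statement.

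For the forward implication I would start from a perfect linear index code $f=(f_1,\dots,f_m)$ for $\mathcal{I}_\mathcal{M}$, guaranteed by Theorem~\ref{th:equivalence2}, and place the transmission $f_j(\chi)$ on the bottleneck edge $(n_j',n_j'')$. This is a legitimate local encoding function at $n_j'$, since the complete bipartite connection from $V_1$ delivers every source message to $n_j'$. The layer $V_3$ then distributes each value $f_j$ to all clients, so that every client node $n_\rho=(z,H)$ receives the entire vector $(f_1,\dots,f_m)$ together with the messages of its side information $H$ (routed directly from $V_1$). The decoding guarantee of the index code supplies a function recovering $z$, which $n_\rho$ emits on its output edge; verifying conditions (N1)--(N3) is then routine.

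For the converse I would analyse an arbitrary linear network code on $\mathcal{N}(\mathcal{I}_\mathcal{M})$. Because each source node $s_i$ has a single incoming (input) edge, every message it emits is a linear image of one source; pushing this through $n_j'$ shows that the message carried by the bottleneck edge $(n_j',n_j'')$ is a global linear form $f_j(\chi)=\sum_i z_i C_{ij}$, i.e.\ a candidate broadcast transmission. The crux is to show that the information reaching any client about the sources --- beyond the side information it is handed directly --- factors through the $m$ shared values $f_1,\dots,f_m$, so that the single broadcast $(f_1,\dots,f_m)$ simultaneously satisfies every client. Granting this, $(f_1,\dots,f_m)$ is a valid index code for $\mathcal{I}_\mathcal{M}$, and as it uses exactly $m=\mu(\mathcal{I}_\mathcal{M})$ transmissions it is perfect; Theorem~\ref{th:equivalence2} then converts it into an $n$-linear representation of $\mathcal{M}$ over $\mathbb{F}_q$.

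I expect the main obstacle to be precisely this factorization step in the converse: one must argue from the topology of Definition~\ref{def:matnet} that the $m$ bottleneck edges constitute the only conduits of source information to the clients apart from the side-information links, so that no client can obtain more than the broadcast $(f_1,\dots,f_m)$ affords it. This is where the layered cut structure does the real work, and I would make it rigorous using an in-degree/cut-set argument together with the same change-of-variables normalization (invertibility of the block matrix indexed by the $R_3$-type clients) employed in the proof of Theorem~\ref{th:equivalence}. The two outer reductions are then immediate.
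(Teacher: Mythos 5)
Your proposal is correct and is essentially the paper's own argument: the paper's entire proof consists of asserting that an $(n,q)$ perfect linear index code for $\mathcal{I}_{\mathcal{M}}$ and an $(n,q)$ linear network code for $\mathcal{N}(\mathcal{I_\mathcal{M}})$ imply one another (``it can be easily seen'') and then invoking Theorem~\ref{th:equivalence2}, which is exactly your decomposition, with the forward routing construction and the converse factorization-through-the-bottlenecks argument filling in what the paper leaves implicit. One caveat worth recording: your converse presumes that each $n_j''$ has the bottleneck edge $(n_j',n_j'')$ as its \emph{sole} in-edge, whereas step 3 of Definition~\ref{def:matnet} as printed also adds the edges $(s_i,n_j'')$, which would give every $n_j''$ direct access to all sources, let each client-specific edge $(n_j'',n_\rho)$ simply route the demanded message, and thereby falsify the ``only if'' direction (e.g., for the non-Pappus matroid) --- this is evidently a typo in the printed construction, and your reading, in which the $m$ bottleneck edges are the only non-side-information conduits to the clients, is the intended one on which both your cut-set step and the proposition itself depend.
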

\begin{proof}
It can be easily seen that any $(n,q)$ perfect linear index code for $\mathcal{I_\mathcal{M}}$ will imply an $(n,q)$ linear network code for $\mathcal{N}(\mathcal{I_\mathcal{M}})$, and vice versa. The proof follows, then, directly from Theorem \ref{th:equivalence2}.
\end{proof}

Figure \ref{fig:Pappus2} shows a sub-network of $\mathcal{N}(\mathcal{I}_{\mathcal{M}np})$ resulting from the construction of  Definition \ref{def:matnet} applied to the non-Pappus matroid $\mathcal{M}_{np}$  of Figure \ref{fig:nonPappusMat}. Node $n_1$ represents the clients in the set $R_3$ of $\mathcal{I}_{\mathcal{M}np}$, $n_2$ the basis $\{1,2,4\}$ of the non-Pappus matroid, and $n_3,n_4,n_5$ the cycle $\{1,2,3\}$.

\subsection{Special case}

Determining the capacity and design of efficient solutions for the general network coding problem is a long-standing open problem. In particular, it is currently not known whether the general problem is solvable in polynomial time, NP-hard, or even undecidable \cite{April05}. Reference \cite{LL04} proved that determining the scalar liner capacity is NP-hard, however, it is not known whether this result holds fro the vector-linear or general network codes. A important problem in this context is whether the hardness of the general network coding problem carries over to special cases of practical interest. For example,  references \cite{MEHK03,DZ06} shat that restricting the network coding problem to multiple unicast connections  does not result in a loss of generality.  In particular, \cite{DZ06} presents a procedure that transform an instance of the network coding problem into an equivalent instance with multiple unicast connections.

In this section we describe a new class of the network coding problems which captures many important properties of the general problem. The instances that belong to this class have a simple structure with a 4-partite underlying communication graph.  In particular, the class is defined by
$$\{\mathcal{N}(\mathcal{I}_{\mathcal{N}'}); \mathcal{N}' \text{ is a communication network}\},$$
where $\mathcal{I}_{\mathcal{N}'}$ is the instance of the Index Coding problem constructed as per Definition~\ref{def:1}, and $\mathcal{N}(\mathcal{I}_{\mathcal{N}'})$  is the instance of the network coding problem obtained through the construction depicted on Figure~\ref{fig:indexnet}. Figure \ref{fig:ICN}(b) shows an example of an instance of a coding network that belongs to this class. Theorem \ref{th:equivalence} implies that for any coding network $\mathcal{N}'$ it holds that $\mathcal{N}'$ has an $(n,q)$ linear network code if and only if such a solution exists for $\mathcal{N}^*=\mathcal{N}(\mathcal{I}_{\mathcal{N}'})$.

%
%
%\begin{proposition}\label{prop:complete}
%Let $\mathcal{N}'$ be a communication network and  $\mathcal{N}^*=\mathcal{N}(\mathcal{I}_{\mathcal{N}'})\in$ICN, $\mathcal{N}'$ has an $(n,q)$ linear network code if and only if  $\mathcal{N}^*$ does.
%\end{proposition}
%
%Proposition \ref{prop:complete} follows directly from Theorem \ref{th:equivalence}. It provides a polynomial time reduction from the network coding problem over any network, to a one over an NC-complete network. This implies that in studying linear network coding, one might just focus on NC-complete networks without any loss of generality.
%

%

\begin{figure}[t]
\begin{center}
	\includegraphics[width=3in]{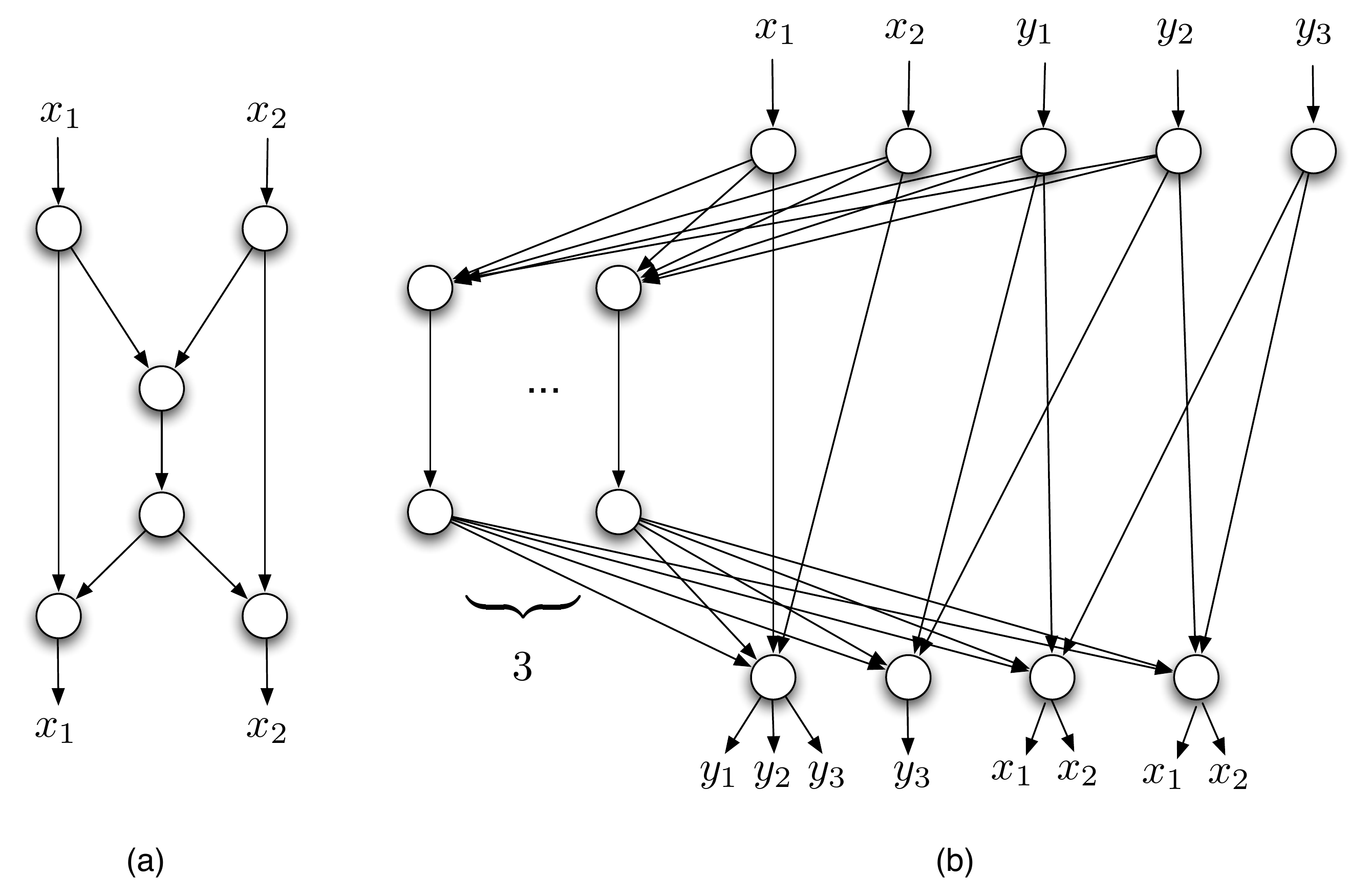}
 \caption{(a) The butterfly network. (b) the equivalent ICN network. A corresponding network code consists of the encoding functions $y_1+x_1, y_2+x_2, y_3+x_1+x_2$ on the bottleneck edge. }\label{fig:ICN}
\end{center}
\end{figure}

%A reason why the network coding problem is hard to solve might lie in its definition which is very general, in contrast, for example, with the multicast  case

%We introduce a new family of communication networks that we call the\emph{ index coding Network family} (ICN). Networks in this family are characterized  by their special underlying graph which is 4-partite as it can be seen in Figure \ref{fig:ICN}(b). We will show that the problem of finding an $(n,q)$ linear network code for any communication network  can be reduced in polynomial time to finding an  $(n,q)$ linear network code for an equivalent network  in ICN.
%
%\begin{definition}
%The \emph{Index Code Network family (ICN)} is the set of  communication networks resulting from the construction of Definition \ref{def:matnet} applied to any  index coding instance $\mathcal{I}_{\mathcal{N'}}$, for all communication networks $\mathcal{N}'$:
%\begin{equation*}
%ICN=\{\mathcal{N}(\mathcal{I}_{\mathcal{N}'}); \mathcal{N}' \text{ is a communication network}\}.
%\end{equation*}
%\end{definition}

% TODO change "network coding problem" to "network coding problem"

\section{Conclusion}\label{sec:conclusion}

This paper focuses on the index coding problem and its relation to the network coding problem and  matroid theory. First, we presented a reduction that maps an instance  $\mathcal{N}$ of the network coding problem to an instance $\mathcal{I}_\mathcal{N}$ of the index coding problem  such that $\mathcal{N}$ has a vector linear solution if and only if there is a perfect index code for $\mathcal{I}_\mathcal{N}$. Our reduction implies that many important results on the network coding problem carry over to the index coding problem. In particular, using the $M$-network described in \cite{MEHK03}, we show that vector linear codes outperform scalar ones. In addition, by using the results of Dougherty et al. in \cite{DFZ05} we show that non-linear codes outperform vector linear codes.

Next, we present a reduction that maps an instance of the matroid representation problem to the instance of the Index coding problem. In particular, for any given matroid
$\mathcal{M}$ we construct an instance of the index coding  problem $\mathcal{I}_\mathcal{M}$, such that $\mathcal{M}$ has a multilinear representation if and only if   $\mathcal{I}_\mathcal{M}$  has a vector linear solution over the same field. Using the properties of the non-Pappus matroid, we gave another example where vector  linear codes outperform than scalar linear ones.

%%This shows that linear index codes can be regarded as a generalization of the concept of matroid representation.
%% alex - I did not get what you wanted to say here.
%
%Second, we investigated the relation between linear index codes and matroid linear representation. We devised a second reduction that constructs to any given matroid $\mathcal{M}$ an index coding problem $\mathcal{I}_\mathcal{M}$, such that $\mathcal{M}$ has a multilinear representation if and only if   $\mathcal{I}_\mathcal{M}$ has a perfect vector linear index code over the same field.

Our results imply that there exists a strong connection between network coding and matroids. In addition, our results yield a family of coding networks that have a simple structure, but still capture many important properties of the general network coding problem.

%
% Through index coding, we were able to establish a connection, that is stronger then what is already described in the literature, between network coding and matroids. Furthermore, we introduced a family of networks defined over  a 4-partite underlying graphs, such that any network have an equivalent one in this family.

%Also, we presented \emph{FD-relations} as structures, more suitable than matroids, to capture the dependencies and independencies that characterizes network and index codes.
% Alex - removed this

\bibliographystyle{unsrt}

\bibliography{coding}

\end{document}